\documentclass[11pt]{article}
\usepackage{latexsym}
\usepackage{theorem}
\usepackage{graphicx}
\usepackage{amsmath,color}
\usepackage{amsfonts}
\usepackage{natbib}
\usepackage{soul}

\headsep 0pt
\headheight 0pt
\topmargin 0pt
\oddsidemargin 0pt
\evensidemargin 0pt
\textwidth 6.5in 
\textheight 8.75in

\theorembodyfont{\rmfamily}
\newtheorem{theorem}{Theorem}

\newtheorem{lemma}[theorem]{Lemma}
\newtheorem{proposition}[theorem]{Proposition}

\theoremstyle{break}

\usepackage{mathtools}
\DeclarePairedDelimiter\ceil{\lceil}{\rceil}
\DeclarePairedDelimiter\floor{\lfloor}{\rfloor}

\newenvironment{proof}{\paragraph{Proof.}}{\hfill$\square$}

\title{The Search and Rescue Game on a Cycle}

\date{}
\author{Thomas Lidbetter\thanks{Department of Systems \& Information Engineering, University of Virginia, VA 22903, USA, tlidbetter@virginia.edu (corresponding author)}  \thanks{Rutgers Business School, 1 Washington Park, Newark, NJ 07102, USA, tlidbetter@business.rutgers.edu} \and Yifan Xie\thanks{Department of Industrial and Systems Engineering, Rutgers University, 96 Frelinghuysen Rd, Piscataway, NJ 08854, USA, yifan.xie@rutgers.edu}}

\providecommand{\keywords}[1]{\textbf{\textbf{Keywords:}} #1}

\linespread{1.5}

\begin{document}
	
\maketitle

\begin{abstract}
\noindent We consider a search and rescue game introduced recently by the first author. An immobile target or targets (for example, injured hikers) are hidden on a graph. The terrain is assumed to be dangerous, so that when any given vertex of the graph is searched, there is a certain probability that the search will come to an end, otherwise with the complementary {\em success probability} the search can continue. A Searcher searches the graph with the aim of finding all the targets with maximum probability. Here, we focus on the game in the case that the graph is a cycle. In the case that there is only one target, we solve the game for equal success probabilities, and for a class of games with unequal success probabilities. For multiple targets and equal success probabilities, we give a solution for an adaptive Searcher and a solution in a special case for a non-adaptive Searcher. We also consider a continuous version of the model, giving a full solution for an adaptive Searcher and approximately optimal solutions in the non-adaptive case.
\end{abstract}

\keywords{game theory; search games; search and rescue; cycles}

\newpage

\section{Introduction}

The {\em search and rescue game} was introduced in \cite{Lidbetter20} to model a scenario in which a target or targets hidden on a graph must be located by a Searcher who faces some sort of danger in the searching process. For instance, in a search and rescue operation taking place in dangerous terrain, the Searcher could become incapacitated; when searching in a military context, the Searcher could be captured by an opponent. 

More specifically, the model assumes that target or targets are hidden on a graph by an adversary, or Hider, and the Searcher searches the vertices of the graph one-by-one until finding all the targets. When searching each vertex, there is independently some given probability that the search will be cut short, otherwise, with the complementary {\em success probability}, the search can continue. The mode of search considered is known as {\em expanding search}, independently introduced by \cite{AP12} and \cite{AL13}. An expanding search of a graph, starting at a given node, chooses subsequent nodes in such a way that each node chosen is adjacent to some previously chosen node. This search paradigm is appropriate for situations in which the cost of retracing one's steps is negligible. For example, when sweeping an area for landmines, once an area has been found to be safe, it can be traversed quickly compared to the slower pace required to traverse a dangerous area. Expanding search can also be applied to a setting in which a large team of searchers successively splits into smaller and smaller groups (see \cite{AL13} for more details).

The model uses the framework of zero-sum games. The game is between a Searcher who aims to maximize the probability of finding the targets, and a Hider who aims to minimize this probability. We seek optimal mixed (randomized) strategies for both players and the value of the game.

\cite{Lidbetter20} solved the search and rescue game in two settings. In the first setting, there is an arbitrary, known number of targets but no graph structure; in the second setting, the problem was solved for the case of only one target hidden on a tree. In this paper, we consider the game in the case of multiple targets hidden on cycle graphs, which could be considered the simplest graphs that are not trees. An example to have in mind could be a search and rescue operation on a hiking trail that forms a loop.

This work lies in the general area of search games. Good overviews on the topic are \cite{AG03} and \cite{H16}. On the topic of search and rescue, \cite{LBA22} studied a rendezvous problem in which two mobile agents wish to find each other. \cite{A11} considered a find-and-fetch problem which can be considered as a search and rescue game. \cite{BK17} considered a search game in which a Hider is located on a cycle. 

Much of this paper concentrates on problems where multiple targets have to be found. There has not been much work on search games with multiple hidden targets. \cite{Lidbetter13} considered a Searcher who wishes to find a cost-minimizing search that locates multiple hidden targets. \cite{N90} considered a game in which two Searchers each try to find a different target before the other. \cite{S87} and \cite{AZ87} both consider search problems in which a Searcher wishes to find one of many hidden targets, but these papers are not game theoretic.

This paper is arranged as follows. We start in Section~\ref{sec:prelim} by defining the game precisely. In Section~\ref{sec:one-target} we consider the game played on a cycle in the case of only one target. In the case of equal success probabilities, the game has a simple solution, but for non-equal success probabilities, the solution appears to be non-trivial. We give a sufficient condition for the game to have a particularly simple solution, and we also give approximately optimal strategies for both players, which perform well if the success probabilities are not too ``spread out''.

In Section~\ref{sec:multiple}, we turn to the case of multiple targets, considering both the settings of {\em adaptive} and {\em non-adaptive} search. Adaptive search is characterized by the Searcher's freedom to choose the next node of search based on the history of the search so far; in non-adaptive search, the Searcher must set out from the beginning the order of search of the nodes. We give a full solution of the adaptive version of the game for equal success probabilities. The non-adaptive version appears harder to analyze and we give the solution in the simple case of three nodes.

Finally, in Section~\ref{sec:cont}, we consider a continuous version of the game, in which the Hider can hide the targets continuously on a cycle network, viewed as a continuous measure space. We use a continuous version of expanding search as defined in \cite{AL13}, where the area that the Searcher has searched is a connected subset of the space that increases at a constant rate.  For the adaptive case, we give a full solution for an arbitrary number of targets. For the non-adaptive case, we give strategies that are close to being optimal, in the sense that they give upper and lower bounds on the value that are very close to each other. We conclude in Section~\ref{sec:conc}.

\section{Preliminaries}
\label{sec:prelim}

In this section we define the search and rescue game precisely, starting with the version of the game with no graph structure, then going on to the ``graph version'' of the game. We also note a result from~\cite{Lidbetter20} which will be helpful later. 

The search and rescue game is a zero-sum game played between a Hider and a Searcher, where the Hider distributes $k$ targets among a finite set $S$ of hiding places and the Searcher aims to maximize the probability of finding them all. A pure strategy for the Hider is a subset $H \in S^{(k)}$, where $S^{(k)}$ denotes all subsets of $S$ of cardinality $k$. The set $H$ corresponds to the $k$ hiding places. A pure strategy for the Searcher is a permutation of $S$, specifying the order of search. More precisely, a pure strategy is a bijection $\sigma:\{1,\ldots,|S|\}\rightarrow S$, where $\sigma(j)$ is interpreted as the $j$th vertex to be searched. 

To each location $i \in S$, a probability $p_i$ is associated, where $0<p_i<1$. This is the probability that the Searcher is not captured herself when searching location $i$, and we refer to $p_i$ as the {\em success probability} of location $i$. The payoff of the game is the probability that the Searcher rescues all $k$ targets without being captured herself. More precisely, for a given pure Hider strategy $H$ and a given pure Searcher strategy $\sigma$, let $\sigma^{-1}(H)$ denote the positions under $\sigma$ of the elements of $H$. Then the payoff of the game is
\[
P(H,\sigma) \equiv \prod_{\{i: i \le \max \sigma^{-1}(H)\}} p_{\sigma(i)}.
\]
That is, $P(H,\sigma)$ is the product of the success probabilities of all the vertices searched up to and including the last vertex that is a member of $H$. A mixed strategy $s$ for the Searcher is a probability distribution over the set of permutations and a mixed strategy $h$ for the Hider is a probability distribution over the set $S^{(k)}$ of subsets of locations of cardinality $k$. For given mixed strategies $s$ and $h$ we will sometimes use the notation $P(h,s)$ to denote the expected payoff.

Since this is a finite zero-sum game, the Searcher has optimal max-min strategies, the Hider has optimal min-max strategies and the game has a value $V$ given by
\[
V = \max_s \min_H P(s,H) = \min_h \max_\sigma P(\sigma,h).
\]
\cite{Lidbetter20} solved the game, and we restate the solution here since we will make use of it later.

\begin{theorem}[Theorem 3 of \cite{Lidbetter20}] \label{thm:Lidbetter}
In the search and rescue game, it is optimal for the Hider to choose each subset $A \in S^{(k)}$ with probability
\[
q_A \equiv \lambda_k \prod_{i \in A}\frac{1-p_i}{p_i}, \text{ where } \lambda_k = \left( \sum_{B \in S^{(k)}} \prod_{i \in B}\frac{1-p_i}{p_i} \right)^{-1}.
\]
It is optimal for the Searcher to choose a subset $A \in S^{(k)}$ of locations to search first with probability $q_A$, then search the remaining elements of $S$ in a uniformly random order.  

If $k=1$, the value $V$ of the game is given by
\[
V = \frac{1 - \prod_{i \in S} p_i }{\sum_{i \in S} (1-p_i)/p_i}.
\]
\end{theorem}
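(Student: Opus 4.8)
The plan is to use the minimax theorem in its verification form. Since the game is finite and zero-sum, it suffices to exhibit a constant $c$ (to be shown equal to $V$ when $k=1$) such that the Hider strategy $h^*$ of Theorem~\ref{thm:Lidbetter} holds the payoff down to $c$ against every pure Searcher response, while the Searcher strategy $s^*$ forces the payoff up to $c$ against every pure Hider response. If both candidates turn out to be \emph{equalizing} — each yielding exactly $c$ against every pure strategy of the opponent — then the value is $\le c$ from the Hider side and $\ge c$ from the Searcher side, so both are optimal and the value equals $c$. Throughout I would write $r_i = (1-p_i)/p_i$, so that $q_A = \lambda_k\prod_{i\in A}r_i$, and exploit the recurring identity $r_ip_i = 1-p_i$.

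For the Hider I would fix an arbitrary pure Searcher strategy and, after relabelling, assume the search order is $1,2,\dots,n$. Writing $\Pi_j=\prod_{i\le j}p_i$, the expected payoff is $\sum_{A\in S^{(k)}}q_A\,\Pi_{\max A}$, where $\max A$ is the last searched position occupied by a target. For $k=1$ this collapses at once: $q_j\Pi_j=\lambda_1 r_j\Pi_j=\lambda_1(1-p_j)\Pi_{j-1}=\lambda_1(\Pi_{j-1}-\Pi_j)$, so the sum telescopes to $\lambda_1(1-\Pi_n)=V$, independent of the order; this simultaneously establishes equalization and pins $c=V$. For general $k$ the sum no longer telescopes, and the crux is to show it is nonetheless order-invariant. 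I would prove this by checking that swapping two adjacent search positions, say locations $a$ and $b$, leaves the sum unchanged: only the target sets whose last element is exactly one of $a,b$ (with no target searched later) are affected, and for each such configuration the two relevant contributions combine, after substituting $r_ap_a=1-p_a$ and $r_bp_b=1-p_b$, to a factor proportional to $1-p_ap_b$, which is symmetric in $a$ and $b$. Since adjacent transpositions generate all permutations, this yields order-invariance and hence an equalizing Hider.

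For the Searcher I would fix a pure Hider strategy and evaluate the expected payoff of $s^*$, writing it as (the product of the success probabilities searched strictly before the last target) times $p_{\mathrm{last}}$. The only delicate ingredient is the ``uniformly random order'' phase applied to the $n-k$ non-initial locations, and I expect this to be the main obstacle: for a uniformly random ordering of a set $T$ containing the last target, the expected product of the success probabilities of the elements preceding that target must be evaluated. I would put it in closed form using the Beta-integral identity $\frac{a!\,b!}{(a+b+1)!}=\int_0^1 x^a(1-x)^b\,dx$, which makes the subset-sum factor as $\int_0^1\prod_{j}\bigl(1-x(1-p_j)\bigr)\,dx$. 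Substituting this back, summing over the initial set $A$ with weights $q_A$, and simplifying once more via $r_ip_i=1-p_i$, I would show the Searcher's payoff is at least $c$ for every pure Hider strategy, with equality — and hence equal to $V$ — in the case $k=1$. Combining the Hider upper bound with this Searcher lower bound closes the argument.
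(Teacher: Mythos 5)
First, a comparison note: the paper does not prove this statement at all --- it is imported verbatim as Theorem 3 of Lidbetter (2020) and used as a black box (e.g.\ in Lemma~\ref{lem:upperbound}) --- so there is no in-paper proof to measure you against; your attempt must stand on its own. On its own terms, your Hider side is correct and complete. For $k=1$ the telescoping $q_j\Pi_j=\lambda_1(\Pi_{j-1}-\Pi_j)$ does give order-invariance and pins the common payoff at $\lambda_1(1-\Pi_n)=V$. For general $k$ the adjacent-transposition check is also right: when locations $a,b$ in consecutive search positions are swapped, only sets whose last target is $a$ with $b\notin A$ (or vice versa) change payoff, and their combined contribution carries the factor $r_ap_a+r_bp_ap_b=(1-p_a)+p_a(1-p_b)=1-p_ap_b$, symmetric in $a,b$ (sets containing both $a$ and $b$ are unaffected, since the running product through position $m+1$ is order-independent). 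Since adjacent transpositions generate $S_n$, the Hider strategy equalizes at some constant $c$, giving the upper bound.

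The genuine gap is on the Searcher side for $k\ge 2$. Your Beta-integral device evaluates ``the expected product of the success probabilities of the elements preceding \emph{that target}'' --- singular --- which is exactly the case where one target lies outside the initial block $A$: for $T=S\setminus A$ and a single remaining target $t$, the expectation is indeed $p_t\int_0^1\prod_{j\in T\setminus\{t\}}\bigl(1-x(1-p_j)\bigr)\,dx$. But under $s^*$ the residual target set $H'=H\setminus A$ has cardinality $m=k-|A\cap H|$, which is $\ge 2$ with positive probability whenever $k\ge2$ (e.g.\ $A\cap H=\emptyset$ has weight $q_A>0$). Then the payoff runs to the \emph{last} of $m$ targets in the random phase, and the correct kernel acquires an order-statistic factor: the expectation becomes $\bigl(\prod_{j\in H'}p_j\bigr)\,m\int_0^1 x^{m-1}\prod_{j\in T\setminus H'}\bigl(1-x(1-p_j)\bigr)\,dx$, not the formula you wrote. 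Your sketch does not register this, and --- more importantly --- the decisive step, showing that after summing over $A$ with weights $q_A$ (where conveniently $q_A\prod_{i\in A}p_i=\lambda_k\prod_{i\in A}(1-p_i)$) the total is at least $c$ for \emph{every} pure $H$, is only asserted; that multilinear identity is where all the difficulty of the theorem lives for $k\ge2$. For $k=1$ your plan does close cleanly: integration by parts gives $\sum_{a\ne j}(1-p_a)\int_0^1\prod_{i\ne a,j}\bigl(1-x(1-p_i)\bigr)\,dx=1-\prod_{i\ne j}p_i$, whence $P(j,s^*)=\lambda_1\bigl(1-\prod_i p_i\bigr)=V$ for all $j$, so $s^*$ equalizes at $V$. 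In summary: the $k=1$ value claim and Hider optimality for all $k$ are established by your outline, but Searcher optimality for $k\ge2$ --- half of the theorem --- remains unproven as written.
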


\cite{Lidbetter20} also considered an extension of the game in which a graph structure is imposed, so that the locations $S$ are vertices of a graph $G$. The Searcher must choose an ordering of the vertices which corresponds to an {\em expanding search} of $G$, as defined in \cite{AL13}. An expanding search is a sequence $\sigma$ of the vertices in $S$ starting with some {\em root vertex} such that for each $j=2,\ldots,|S|$ the vertex $\sigma(j)$ in the $j$th place must be a neighbor of some other previously chosen vertex. That is, $\sigma(1)=O$ and each $\sigma(j)$ is a neighbor of one of the vertices in $\{\sigma(1),\sigma(2),\ldots,\sigma(j-1)\}$ for $j>1$. This extension of the game was solved in \cite{Lidbetter20} for $k=1$ in the case that the graph is a tree. We denote the search and rescue game played on a graph $G$ by $\Gamma=\Gamma(G)$

In this paper, we consider the game played on a cycle $C_n$, which we define as the graph with vertices $\{0,1,2,\ldots,n\}$ and edges $\{j,j+1\}$ for $j=0,\ldots,n-1$ and the edge $\{n,0\}$. Note that $C_n$ has $n+1$ vertices (contrary to the convention). Vertex 0 is the root vertex, and we assume that $p_0=1$, since any expanding search necessarily starts with vertex $0$. We may also assume that the Hider does not hide any targets at vertex $0$.

Note that Theorem~\ref{thm:Lidbetter} gives an upper bound on the value of the game, since the Hider strategy described in the theorem is available to use on any graph. In general the Searcher will not have a strategy that can meet this bound. We summarize this observation in the lemma below.

\begin{lemma} \label{lem:upperbound}
The value of the search and rescue game $\Gamma(C_n)$ is bounded by the value given in Theorem~\ref{thm:Lidbetter}.
\end{lemma}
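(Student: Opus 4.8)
The plan is to exploit the fact that imposing the cycle structure restricts only the Searcher and leaves the Hider untouched. In the unstructured game of Theorem~\ref{thm:Lidbetter} the Searcher may use any permutation of $S$, whereas in $\Gamma(C_n)$ she is confined to those permutations arising from expanding searches of the cycle; the Hider's strategy space, namely all distributions over $S^{(k)}$, is identical in both games. Since $C_n$ is connected, every expanding search starting at the root visits all of $S$ and is therefore, in particular, a permutation of $S$. Hence the set $\mathcal{E}$ of pure Searcher strategies available in $\Gamma(C_n)$ is a subset of the pure Searcher strategies available in the unstructured game, and the lemma is simply the statement that shrinking the Searcher's strategy set cannot increase the value.

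First I would write the value of $\Gamma(C_n)$ in its min-max form, $V_{C_n}=\min_h \max_{\sigma\in\mathcal{E}} P(\sigma,h)$, which is legitimate because $\Gamma(C_n)$ is a finite zero-sum game. Let $h^*$ be the Hider strategy prescribed by Theorem~\ref{thm:Lidbetter}; as just noted, $h^*$ is an available strategy in $\Gamma(C_n)$. Bounding the outer minimum by the particular choice $h=h^*$ and then enlarging the inner maximum from $\mathcal{E}$ to the full set of permutations of $S$ gives
\[
V_{C_n} \;\le\; \max_{\sigma\in\mathcal{E}} P(\sigma,h^*) \;\le\; \max_{\sigma} P(\sigma,h^*) \;=\; V,
\]
where the final equality holds because $h^*$ is an optimal (min-max) Hider strategy in the unstructured game, so that $\max_\sigma P(\sigma,h^*)$ equals its value $V$. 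This chain is exactly the claimed bound.

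There is essentially no hard step here; this is a monotonicity-of-value argument, and the only points requiring care are bookkeeping. One must confirm that each expanding search genuinely corresponds to a full permutation of $S$, so that the two games are being compared on the same ground — this is where connectedness of $C_n$ enters. One should also check that the cycle conventions are consistent with Theorem~\ref{thm:Lidbetter}: since $p_0=1$, the factor $(1-p_0)/p_0$ vanishes, so $h^*$ automatically assigns probability zero to every subset containing vertex $0$, in agreement with the assumption that the Hider places no target there.
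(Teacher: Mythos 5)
Your proof is correct and follows essentially the same route as the paper, which justifies the lemma with a one-line observation that the Hider strategy of Theorem~\ref{thm:Lidbetter} remains available on any graph, so the value cannot exceed the unstructured game's value; your min-max chain $V_{C_n} \le \max_{\sigma\in\mathcal{E}} P(\sigma,h^*) \le \max_{\sigma} P(\sigma,h^*) = V$ is just a careful spelling-out of that observation. Your side remarks on connectedness and on $p_0=1$ forcing $h^*$ to avoid vertex $0$ are consistent with the paper's conventions and add nothing contrary to its argument.
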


%We will denote a mixed strategy for the Hider by a lower case letter $h$. Formally, $h$ is a set of probabilities $\{x_H:H \in [n]^{(k)}\}$ that sum to 1. For the Searcher, a mixed strategy will be denoted by a lower case letter $s$, which is a set of probabilities $y_\sigma$ summing to 1 such that $\sigma$ is a pure Searcher strategy.

\section{Searching for One Target}
\label{sec:one-target}

In this section, we consider the game $\Gamma(C_n)$ in the case that there is only $k=1$ target. In this case, a pure strategy for the Hider is simply an element $j\in C_n$. For a given pure strategy $\sigma$ of the Searcher, the payoff is given by 
\[
P(j,\sigma) \equiv \prod_{\{i: i \le \sigma^{-1}(j)\}} p_{\sigma(i)}.
\]

\subsection{Equal detection probabilities}

We begin by considering the case with equal detection probabilities. In this case the game has a simple solution in which both players mix between only two pure strategies. We denote the Searcher strategy $\sigma^C \equiv (1,2,\ldots,n)$ of traversing the whole cycle clockwise by $\sigma^C$ and the strategy $\sigma^A \equiv (n,n-1,\ldots,1)$ of traversing the whole cycle anticlockwise by $\sigma^A$.

\begin{theorem} \label{thm:equal-p}
Suppose $p_1=p_2=\cdots=p_n=p$. The value of the game is $(p^{\floor*{(n+1)/2}} + p^{\ceil*{(n+1)/2}})/2$. It is optimal for the Hider to choose vertices $\floor*{(n+1)/2}$ or $\ceil*{(n+1)/2}$ with equal probability. It is optimal for the Searcher to choose $\sigma^C$ or $\sigma^A$ with equal probability.
\end{theorem}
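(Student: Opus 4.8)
My plan is to prove matching upper and lower bounds on the value $V$, analysing the proposed strategies directly on the cycle. Write $m = \floor*{(n+1)/2}$ and $M = \ceil*{(n+1)/2}$, so that $m + M = n+1$ and, since there are $n+1$ vertices, each of $m$ and $M$ lies at graph-distance exactly $m$ from the root $0$: the clockwise distances to $m$ and $M$ are $m$ and $M$, while the anticlockwise distances are $M$ and $m$ respectively. I first record the structure of pure Searcher strategies: an expanding search of $C_n$ from $0$ maintains a contiguous arc containing $0$ and at each step extends it by one vertex at its clockwise or anticlockwise end, so that if a target is the $t$-th vertex searched (after $0$, which contributes $p_0=1$) the payoff is $p^t$.

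For the lower bound I would analyse the proposed Searcher strategy: against a target at vertex $j$, $\sigma^C$ finds it at step $j$ and $\sigma^A$ at step $n+1-j$, so the equal mixture yields expected payoff $g(j)/2$ with $g(j) = p^j + p^{n+1-j}$. It then suffices to show $\min_{1 \le j \le n} g(j) = p^m + p^M$, attained at $j \in \{m, M\}$. I would do this by computing $g(j+1) - g(j) = (1-p)\left(p^{\,n-j} - p^{\,j}\right)$, whose sign is that of $2j - n$ (since $0<p<1$); hence $g$ decreases until $j$ reaches $n/2$ and increases thereafter, so its integer minimum occurs at the integer(s) closest to $(n+1)/2$, namely $m$ (and $M$). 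This shows the Searcher guarantees at least $(p^m + p^M)/2$, and so $V \ge (p^m+p^M)/2$.

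For the upper bound I would fix the proposed Hider strategy and show that every expanding search $\sigma$ satisfies $P(m,\sigma) + P(M,\sigma) \le p^m + p^M$, whence the expected payoff is at most $(p^m+p^M)/2$. When $n$ is odd we have $m=M$, a single Hider vertex at distance $m$, which any search reaches at step $\ge m$, giving payoff $\le p^m = (p^m+p^M)/2$. When $n$ is even the two targets are distinct. Both lie at distance $m$ from $0$, so whichever is found first is found at step $\ge m$ and contributes at most $p^m$. The crux is to show the other target is found at step $\ge M$: the cheap approach to $m$ is clockwise and the cheap approach to $M$ is anticlockwise, so once the arc has reached the first target it has clockwise or anticlockwise extent at least $m$, and tracking the monotone arc extents shows the second target can only be covered after a total of at least $M$ steps. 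Hence the second contributes at most $p^M$, giving the claimed bound and $V \le (p^m+p^M)/2$.

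I expect the main obstacle to be exactly this last step — the ``$\ge M$'' lower bound on the step at which the second target is found. A tempting but false simplification is that once both deep vertices are covered the arc must be the whole cycle; this already fails for $n=4$, where the arc $\{0,1,2,3\}$ covers both targets but omits vertex $4$. The correct argument must instead keep track of the clockwise and anticlockwise extents of the arc and use that the first target, found via its own cheap direction, forces the second onto its expensive direction, costing at least $M$ steps in total. Combining the two bounds gives $V = (p^m + p^M)/2$ and confirms the stated optimal strategies; the small cases ($n \le 2$) can be verified directly.
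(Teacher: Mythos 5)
Your proposal is correct, and its overall shape --- evaluate the two proposed mixed strategies and prove matching bounds --- is the same as the paper's, but both halves are executed differently. For the lower bound, the paper replaces your unimodality argument with a one-line algebraic identity: writing $m=\floor*{(n+1)/2}$ and $M=\ceil*{(n+1)/2}$ and assuming $j \le (n+1)/2$ by symmetry, it observes $\tfrac12(p^j+p^{n+1-j}) = \tfrac12 p^j(1-p^{M-j})(1-p^{m-j}) + \tfrac12(p^m+p^M)$, where both factors are nonnegative; your discrete-derivative computation $g(j+1)-g(j)=(1-p)(p^{n-j}-p^j)$ proves the same minimization and is equally short. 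For the upper bound, the paper is actually terser than you are: it simply asserts that ``by symmetry, there are precisely two best responses'' to $h$, namely $\sigma^C$ and $\sigma^A$, and computes their common payoff --- it never carries out the arc analysis you correctly identify as the crux (and its uniqueness claim holds only up to payoff equivalence: for $n$ even, any search whose first $M$ steps go clockwise performs identically to $\sigma^C$ against $h$). Your flagged step does close, in exactly the way you sketch: at the moment the second target is covered, the searched set is an arc through $0$ with clockwise extent $a$ and anticlockwise extent $b$, and the step count is $t=a+b$; containment of $m$ forces $a\ge m$ or $b\ge M$, containment of $M$ forces $a\ge M$ or $b\ge m$, and the four combinations give $t\ge M$, $t\ge 2m$, $t\ge 2M$, $t\ge M$ respectively, where $2m-M = 3m-(n+1) \ge 0$ for even $n\ge 2$ (the odd case is trivial since $m=M$ is a single vertex found at step $\ge m$). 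So every case yields $t \ge M$, the second target contributes at most $p^M$, and your bound $P(m,\sigma)+P(M,\sigma) \le p^m+p^M$ follows. The trade-off: the paper's proof is shorter, while yours makes explicit why no expanding search on the cycle --- including those that alternate directions --- can beat $(p^m+p^M)/2$ against $h$, a point the paper's symmetry remark quietly subsumes.
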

\begin{proof}
We denote the Hider and Searcher strategies described in the statement of the theorem by $h$ and $s$, respectively. We first show the Hider can guarantee the expected payoff of the game is at most $(p^{\floor*{(n+1)/2}} + p^{\ceil*{(n+1)/2}})/2$ by using $h$. Indeed, by symmetry, there are precisely two best responses for the Searcher to this strategy: $\sigma^C$ and $\sigma^A$. The expected payoff if the Searcher uses either of these strategies is
\[
P(h,\sigma^C)=P(h,\sigma^A) = \frac{1}{2} \prod_{i=1}^{\floor*{(n+1)/2}} p_i + \frac{1}{2} \prod_{i=1}^{\ceil*{(n+1)/2}} p_i = \frac{1}{2}(p^{\floor*{(n+1)/2}} + p^{\ceil*{(n+1)/2}}).
\]
Therefore, the value of the game is at most $(p^{\floor*{(n+1)/2}} + p^{\ceil*{(n+1)/2}})/2$. To prove that this is also a lower bound for the value, we consider the Searcher strategy $s$, and calculate the expected payoff when the Hider uses some pure strategy $j \in [n]$. By symmetry, we may assume that $j \le (n+1)/2$.
\begin{align*}
P(j, s) &= \frac{1}{2}\prod_{i=1}^j p_i + \frac{1}{2}\prod_{i=j}^n p_i \nonumber \\
& = \frac{1}{2}(p^j + p^{n+1-j}) \\
& = \frac{1}{2}p^j(1-p^{\ceil*{(n+1)/2}-j})(1-p^{\floor*{(n+1)/2}-j}) + \frac{1}{2}(p^{\floor*{(n+1)/2}} + p^{\ceil*{(n+1)/2}}) \\
&\ge \frac{1}{2}(p^{\floor*{(n+1)/2}} + p^{\ceil*{(n+1)/2}}).
\end{align*}
Therefore, the value of the game is at least $(p^{\floor*{(n+1)/2}} + p^{\ceil*{(n+1)/2}})/2$, and we must have equality.  Furthermore, strategies $h$ and $s$ are optimal.
\end{proof}

\subsection{Unequal detection probabilities}

We now consider the game in the case that the detection probabilities may not be equal. Note that for $n=2$, the vertices may be searched in any order, so the solution of the game is given by Theorem~\ref{thm:Lidbetter}. 

So we consider the game for $n \ge 3$, and we start by giving necessary and sufficient conditions that the Hider has an optimal strategy of a similar form to that of Theorem~\ref{thm:equal-p}.

For each vertex $j \in [n]$, we write $\pi_j$ for the product $p_1 p_2 \cdots p_j$ and we write $\bar{\pi}_j$ for the product $p_j p_{j+1} \cdots p_n$. Clearly, $\pi_j$ is decreasing in $j$ and $\bar{\pi}_j$ is increasing in $j$. Also, $\pi_1 > \bar{\pi}_1$ and $\pi_n < \bar{\pi}_n$. It follows that there exists a unique $j\in[n]$ such that $\pi_j \ge \bar{\pi}_j$ and $\pi_{j+1} < \bar{\pi}_{j+1}$. 

\begin{lemma} \label{lem:j=1}
Let $j$ be such that $\pi_j \ge \bar{\pi}_j$ and $\pi_{j+1} < \bar{\pi}_{j+1}$. Suppose the following condition holds.
\begin{align}
\frac{\pi_i - \pi_j}{\bar{\pi}_j - \bar{\pi}_i} &\ge \frac{\pi_j - \pi_{j+1}}{\bar{\pi}_{j+1}-\bar{\pi}_j} \text{ for all $i \neq j, j+1$}; \label{eq:cond1} 
\end{align}
Then the value of the game $\Gamma(C_n)$ is given by
\begin{align}
V \equiv \frac{\bar{\pi}_{j+1}\pi_j-\pi_{j+1}\bar{\pi}_{j}}{\pi_j-\pi_{j+1}+\bar{\pi}_{j+1}-\bar{\pi}_j}. \label{eq:noneqV}
\end{align}
It is optimal for the Hider to choose vertex $j$ with probability $q$ and vertex $j+1$ with probability $1-q$, where
\begin{align}
q=\frac{\bar{\pi}_{j+1}-\pi_{j+1}}{\pi_j-\pi_{j+1}+\bar{\pi}_{j+1}-\bar{\pi}_j}. \label{eq:noneq-hider}
\end{align}
It is optimal for the Searcher to choose strategy $\sigma^C$ with probability $r$ and strategy $\sigma^A$ with probability $1-r$, where
\begin{align}
r = \frac{\bar{\pi}_{j+1} - \bar{\pi}_{j}}{\pi_j-\pi_{j+1}+\bar{\pi}_{j+1}-\bar{\pi}_j}. \label{eq:noneq-searcher}
\end{align}
\end{lemma}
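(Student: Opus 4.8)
The plan is to verify that the pair $(h,s)$ described in the statement is a saddle point, from which the optimality of the strategies and the value formula follow. The first step is to record the payoffs of the two pure searches against a single hidden vertex: since $\sigma^C$ reaches vertex $i$ after searching $1,\dots,i$ and $\sigma^A$ reaches it after searching $n,n-1,\dots,i$, we have $P(i,\sigma^C)=\pi_i$ and $P(i,\sigma^A)=\bar{\pi}_i$. Hence against the mixed search $s$ we get $P(i,s)=r\pi_i+(1-r)\bar{\pi}_i$, and against the mixed hide $h$ we get $P(h,\sigma^C)=q\pi_j+(1-q)\pi_{j+1}$ and $P(h,\sigma^A)=q\bar{\pi}_j+(1-q)\bar{\pi}_{j+1}$. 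A direct computation then shows that the stated $q$ and $r$ are precisely the values equalizing the Hider's two relevant responses and the Searcher's two relevant responses, i.e. $P(h,\sigma^C)=P(h,\sigma^A)$ and $P(j,s)=P(j+1,s)$, and that this common value is the $V$ of \eqref{eq:noneqV}. The hypotheses $\pi_j\ge\bar{\pi}_j$ and $\pi_{j+1}<\bar{\pi}_{j+1}$ are exactly what guarantee $q,r\in[0,1]$.

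Next I would establish that $s$ guarantees $V$, that is $P(i,s)\ge V$ for every vertex $i$. Since $P(j,s)=P(j+1,s)=V$, it suffices to treat $i\ne j,j+1$. Writing $D=\pi_j-\pi_{j+1}+\bar{\pi}_{j+1}-\bar{\pi}_j$ for the common denominator in \eqref{eq:noneqV} and using $rD=\bar{\pi}_{j+1}-\bar{\pi}_j$, $(1-r)D=\pi_j-\pi_{j+1}$, the inequality $P(i,s)-V\ge 0$ becomes, after clearing the positive factor $D$, the statement $(\pi_j-\pi_{j+1})(\bar{\pi}_i-\bar{\pi}_j)+(\bar{\pi}_{j+1}-\bar{\pi}_j)(\pi_i-\pi_j)\ge 0$. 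Dividing by $(\bar{\pi}_{j+1}-\bar{\pi}_j)(\bar{\pi}_j-\bar{\pi}_i)$ while tracking the sign of $\bar{\pi}_j-\bar{\pi}_i$ reproduces condition \eqref{eq:cond1} for the indices $i<j$ (where $\bar{\pi}_j-\bar{\pi}_i>0$) and the mirror inequality for the indices $i>j+1$. Geometrically, this is the assertion that among the points $(\bar{\pi}_i,\pi_i)$ the chord joining the two support vertices lies on the lower-left boundary of their convex hull, so that the linear functional with weights $(1-r,r)$ that defines $s$ is minimized exactly at $j$ and $j+1$; condition \eqref{eq:cond1} is the hypothesis that makes this hold.

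Finally I would show that $h$ guarantees $V$, i.e. $P(h,\sigma)\le V$ for every expanding search $\sigma$. The decisive feature is that the Hider's mass sits on the two \emph{adjacent} vertices $j$ and $j+1$. When an expanding search first reaches a given vertex, its searched set is an arc containing the root $0$, so the factor accumulated by the time either target is found is a product of success probabilities over such an arc; a search that does not commit to a single direction must search additional vertices before reaching $j$ or $j+1$, and each extra factor is below $1$, so this can only decrease the accumulated product. An interchange argument along these lines shows that the only undominated responses to $h$ are the two one-directional sweeps $\sigma^C$ and $\sigma^A$, both of which yield exactly $V$ by the equalization above, whence $\max_\sigma P(h,\sigma)=V$. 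Combining the two bounds gives the value $V$ together with the optimality of $h$ and $s$. I expect the main obstacle to be the Searcher-side step: the reduction of $P(i,s)\ge V$ to \eqref{eq:cond1} is routine for $i<j$, but one must argue carefully, through the convex-hull picture or the mirror form of \eqref{eq:cond1}, that the vertices $i>j+1$ also satisfy $P(i,s)\ge V$, since that is precisely where the placement of $j$ and the hypothesis are doing the work.
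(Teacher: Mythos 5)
Your proposal takes essentially the same route as the paper's proof: reduce to the $2\times 2$ game on $\{\sigma^C,\sigma^A\}\times\{j,j+1\}$, solve it by equalization to obtain $q$, $r$ and $V$, and then verify that $s$ guarantees at least $V$ against every vertex $i\neq j,j+1$. Your payoff bookkeeping ($P(i,\sigma^C)=\pi_i$, $P(i,\sigma^A)=\bar{\pi}_i$), your observation that $\pi_j\ge\bar{\pi}_j$ and $\pi_{j+1}<\bar{\pi}_{j+1}$ are exactly what put $q$ in $[0,1]$, and your reduction of $P(i,s)\ge V$ to the product-form inequality $(\pi_i-\pi_j)(\bar{\pi}_{j+1}-\bar{\pi}_j)+(\pi_j-\pi_{j+1})(\bar{\pi}_i-\bar{\pi}_j)\ge 0$ all match the paper. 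Your Hider-side argument is also the paper's: since the searched region of an expanding search is at every moment an arc containing the root, and the Hider's mass sits on the \emph{adjacent} vertices $j$ and $j+1$, any search reaching $j$ first has payoff at most $\pi_j$ against $H=j$ and at most $\pi_{j+1}$ against $H=j+1$, both achieved simultaneously by $\sigma^C$ (symmetrically for $\sigma^A$); this is your ``interchange argument'' made precise, and it is worth writing out at this level of detail rather than sketching, since it is what caps the value at $V$ independently of Condition~(\ref{eq:cond1}).

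The crux you flag in your final paragraph deserves a sharper resolution than you give it. You are right that for $i>j+1$ one has $\bar{\pi}_j-\bar{\pi}_i<0$, so dividing the product-form inequality by $(\bar{\pi}_{j+1}-\bar{\pi}_j)(\bar{\pi}_j-\bar{\pi}_i)$ reverses the inequality: what is needed there is the \emph{mirror} of~(\ref{eq:cond1}), i.e.\ $\frac{\pi_i-\pi_j}{\bar{\pi}_j-\bar{\pi}_i}\le\frac{\pi_j-\pi_{j+1}}{\bar{\pi}_{j+1}-\bar{\pi}_j}$. But you then write that ``condition~(\ref{eq:cond1}) is the hypothesis that makes this hold,'' which, read literally, is false: for $i>j+1$ the fraction inequality in~(\ref{eq:cond1}) is equivalent to $P(i,s)\le V$, the opposite of what you need. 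Indeed, in the equal-$p$ case the literal~(\ref{eq:cond1}) reduces for $i>j+1$ to $p^{i+j-n-1}\ge p^{2j-n}$, i.e.\ $i\le j+1$, which fails, while the mirror inequality holds — so under the literal reading the lemma's hypothesis would be unsatisfiable exactly where the paper later invokes it to re-derive Theorem~\ref{thm:equal-p}. The condition must therefore be understood in its cross-multiplied form $(\pi_i-\pi_j)(\bar{\pi}_{j+1}-\bar{\pi}_j)+(\pi_j-\pi_{j+1})(\bar{\pi}_i-\bar{\pi}_j)\ge 0$ for all $i\neq j,j+1$ (equivalently, with the fraction inequality reversed when $\bar{\pi}_j-\bar{\pi}_i<0$). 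On this point your write-up is actually more careful than the paper, whose proof asserts an unqualified ``if and only if'' without tracking the sign of $\bar{\pi}_j-\bar{\pi}_i$; to complete your proof, simply replace the deferral to~(\ref{eq:cond1}) with the explicit statement that the hypothesis is used in product form.
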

\begin{proof}
First consider a restricted version of the game where the Hider's pure strategy set is reduced to only vertices $j$ and $j+1$. The value of this game is at most the value of the original game and all Searcher strategies are weakly dominated by the strategies $\sigma^C$ and $\sigma^A$. This is because for any Searcher strategy, if vertex $j$ is searched before vertex $j+1$, then the Searcher must search all of vertices $0,1,\ldots,j-1$ before searching vertex $j$, and of all possible strategies which have this property, $\sigma^C$ clearly maximizes the payoff against the Hider strategies $H=j$ and $H=j+1$. Similarly if $j+1$ is searched before $j$.

It is easy to verify that for this $2 \times 2$ game, the value is $V$, as defined in~(\ref{eq:noneqV}) and optimal strategies are given by~(\ref{eq:noneq-hider}) and~(\ref{eq:noneq-searcher}).

So to complete the proof, we just need to check that the Searcher strategy given by~(\ref{eq:noneq-searcher}) guarantees a payoff of at least $V$ for any pure strategy $i \neq j,j+1$ of the Hider.  Indeed, in this case, if the Searcher uses the strategy given by~(\ref{eq:noneq-searcher}), the expected payoff is 
\[
\frac{(\bar{\pi}_{j+1} - \bar{\pi}_{j})\pi_i + (\pi_j - \pi_{j+1})\bar{\pi}_i}{\pi_j-\pi_{j+1}+\bar{\pi}_{j+1}-\bar{\pi}_j}.
\]
This expected payoff is at least $V$ if and only if Condition~(\ref{eq:cond1}) holds.
\end{proof}

In the special case that all the $p_i$'s are equal to some~$p$, it is easy to verify that Condition~(\ref{eq:cond1}) reduces to $p^{j-i} \le 1$ for $i <j$ and it reduces to $p^{i-j} \le 1$ for $i > j$, both of which are trivially true.  Therefore, Lemma~\ref{lem:j=1} gives an alternative proof of Theorem~\ref{thm:equal-p}. 

While the conditions of Lemma~\ref{lem:j=1} seem rather abstract, in the examples we have considered, they are usually satisfied. 

Consider the case $n=3$. Then Condition~(\ref{eq:cond1}) reduces to $(1-p_2)^2 \ge p_2(1-p_1)(1-p_3)$.
We checked whether this condition holds for the 729 possible choices of the parameters $(p_1,p_2,p_3)$, given by choosing one of the values $1/10,2/10,\ldots,9/10$ for each $p_i$. Out of these 729 games, the condition was met in 526, or 72\% of cases.

We finish this section by giving a full solution to the game for $n=3$. Without loss of generality, we assume that $p_1 \ge p_3$ (otherwise we could relabel the vertices in reverse order).

\begin{proposition}
Consider the game $\Gamma(C_3)$, where $p_1 \ge p_3$. The solution of the game splits into two cases as follows.

\textbf{Case 1.} If $(1-p_2)^2 \ge p_2(1-p_1)(1-p_3)$ then an optimal strategy for the Hider is to hide at vertices 2 and 3 with probabilities proportional to $p_3(1-p_1p_2)$ and $p_2(p_1-p_3)$, respectively. An optimal strategy for the Searcher is to choose $\sigma^C$ and $\sigma^A$ with probabilities proportional to $p_3(1-p_2)$ and $p_1p_2(1-p_3)$, respectively. The value of the game is 
\[
\frac{p_1 p_2 p_3(1-p_2p_3)}{p_1p_2(1-p_3)+p_3(1-p_2)}. \]

\textbf{Case 2.} If $(1-p_2)^2 < p_2(1-p_1)(1-p_3)$ then it is optimal for the Hider to hide at vertices $i$ with probability proportional to $(1-p_i)/p_i$ for $i=1,2,3$. It is optimal for the Searcher to choose $\sigma^C$ and $\sigma^A$ with probabilities $q$ and $r$, where
\[
q = \frac{p_3(1-p_2)(1-p_1 p_2 p_3)}{(1-p_2 p_3)(p_1 p_2 (1-p_3) + p_1 p_3(1-p_2) + p_2 p_3(1-p_1))}
\]
and
\[
r = \frac{p_1(p_2+p_3)+p_2p_3(3+p_2p_3)}{(1-p_2 p_3)(p_1 p_2 (1-p_3) + p_1 p_3(1-p_2) + p_2 p_3(1-p_1))}.
\]
With probability $1-q-r$ the Searcher searches the vertices in the order $(1,3,2)$.

The value of the game is
\begin{align}
\frac{1-p_1 p_2 p_3}{(1-p_3)/p_3 +  (1-p_2)/p_2 +  (1-p_1)/p_1}. \label{eq:n=3}
\end{align}
\end{proposition}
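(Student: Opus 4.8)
My plan is to treat $\Gamma(C_3)$ as a finite matrix game and solve it by exhibiting candidate optimal strategies and verifying the saddle-point conditions, with the case split dictated by whether Condition~(\ref{eq:cond1}) holds. First I would enumerate the expanding searches of $C_3$ that start at the root $0$: there are exactly four, namely $\sigma^C=(1,2,3)$, $\sigma^A=(3,2,1)$, $(1,3,2)$ and $(3,1,2)$, obtained by first stepping to $1$ or to $3$ and then completing the cycle. Tabulating the payoffs $P(j,\sigma)$ for each Hider vertex $j\in\{1,2,3\}$ gives a small table; for instance $P(2,(1,3,2))=p_1p_2p_3$ because vertex $2$ is searched last, while $P(2,\sigma^A)=p_2p_3$. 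This table is the object I would analyze in each case.

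For Case~1 I would simply invoke Lemma~\ref{lem:j=1}. Since $p_1\ge p_3$ gives $\pi_2=p_1p_2\ge p_2p_3=\bar\pi_2$ and $p_1p_2<1$ gives $\pi_3=p_1p_2p_3<p_3=\bar\pi_3$, the pivot index of Lemma~\ref{lem:j=1} is $j=2$, and for $n=3$ Condition~(\ref{eq:cond1}) has only the single instance $i=1$, which reduces to the hypothesis of Case~1. It then remains to substitute $j=2$ together with $\pi_2=p_1p_2$, $\pi_3=p_1p_2p_3$, $\bar\pi_2=p_2p_3$, $\bar\pi_3=p_3$ into~(\ref{eq:noneqV})--(\ref{eq:noneq-searcher}) and simplify: for example $\bar\pi_3-\pi_3=p_3(1-p_1p_2)$ and $\pi_2-\bar\pi_2=p_2(p_1-p_3)$ recover the stated Hider weights, while the value collapses to $p_1p_2p_3(1-p_2p_3)/(p_1p_2(1-p_3)+p_3(1-p_2))$. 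This step is routine algebra.

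For Case~2 the two-strategy solution of Lemma~\ref{lem:j=1} is no longer valid and the Searcher must use a third search, $(1,3,2)$. I would establish the value in two halves. For the upper bound, note that the stated Hider distribution proportional to $(1-p_i)/p_i$ is precisely the $k=1$ Hider strategy of Theorem~\ref{thm:Lidbetter}; since that strategy caps the payoff at the value in~(\ref{eq:n=3}) against \emph{every} ordering of the vertices, it does so a fortiori against the expanding searches, so by Lemma~\ref{lem:upperbound} the value is at most~(\ref{eq:n=3}). For the lower bound I would take the Searcher mixture placing weights $q$, $r$, $1-q-r$ on $\sigma^C$, $\sigma^A$, $(1,3,2)$, impose that its expected payoff equals this value against each of $j=1,2,3$, and solve the resulting linear system for $q$ and $r$. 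Because the Searcher faces only three pure Hider replies, an equalizing mixture that attains the claimed value against all three guarantees at least that much; combined with the upper bound this yields the value and the optimality of both strategies.

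The main obstacle is the Case~2 bookkeeping: solving the equalizing system and, above all, verifying that $q$, $r$ and $1-q-r$ are all nonnegative exactly in the regime where the Case~1 inequality fails. I expect the boundary between the two cases to be precisely the point where the weight on $(1,3,2)$ vanishes, so a useful consistency check is that the two value formulas agree there. Note finally that I never need to reason about the fourth expanding search $(3,1,2)$: to certify that the value is at least~(\ref{eq:n=3}) it suffices that the exhibited three-strategy mixture attains that value against each of the three Hider vertices, and to certify that it is at most~(\ref{eq:n=3}) the Theorem~\ref{thm:Lidbetter} Hider strategy suffices.
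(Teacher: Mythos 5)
Your proposal is correct and follows essentially the same route as the paper: Case~1 is handled by invoking Lemma~\ref{lem:j=1} with pivot $j=2$ (where Condition~(\ref{eq:cond1}) has the single instance $i=1$), and Case~2 combines the upper bound from Lemma~\ref{lem:upperbound} with an algebraic verification that the three-strategy Searcher mixture equalizes the payoff, including the necessary check that $q$, $r$ and $1-q-r$ are nonnegative. Deriving $q$ and $r$ by solving the equalizing system rather than substituting the stated weights is the same computation organized in reverse, so there is no substantive difference from the paper's proof.
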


\begin{proof}
For Case 1, we note that $p_1 \ge p_2$ implies that $\pi_2 \ge \bar{\pi}_2$ and $\pi_3 < \bar{\pi}_3$. So, taking $j=2$ and noting that Condition~(\ref{eq:cond1}) is equivalent to the condition $(1-p_2)^2 \ge p_2(1-p_1)(1-p_3)$, the solution of the game is given by Lemma~\ref{lem:j=1}. The optimal strategies reduce to those given in the statement of this Proposition.

For Case 2, Lemma~\ref{lem:upperbound} shows that the Hider strategy given in the statement of the Proposition ensures a payoff of at least that given in~(\ref{eq:n=3}). To verify that this payoff is also achieved by the Searcher strategy described in the statement of the Proposition is a straightforward algebraic exercise which we leave to the reader. However, we must also check that the probabilities $q$, $r$ and $1-q-r$ are indeed probabilities: in particular that they are non-negative.  It is clear that $q$ and $r$ are non-negative, and to show that $1-q-r$ is non-negative we compute
\begin{align*}
    (1-p_2 p_3)(p_1 p_2 (1-p_3) &+ p_1 p_3(1-p_2) + p_2 p_3(1-p_1))(1-q-r)\\ 
    &= 2p_2p_3+p1p_2^2p_3^2-p_1p_2^2p_3-p_2^2p_3^2-p_3 \\
    & \ge (p_1p_2^2+p_2^2p_3-p_1p_2^2p_3+1)p_3+p1p_2^2p_3^2-p_1p_2^2p_3-p_2^2p_3^2-p_3 \\
    &= 0,
\end{align*}
where the inequality follows from $(1-p_2)^2 \ge p_2(1-p_1)(1-p_3)$.
\end{proof}

\subsection{An approximately optimal Searcher strategy}

In this section we present a Searcher strategy that is approximately optimal when the probabilities $p_i$ are not too ``spread out''. To do this, we start by considering a related game which is similar to the one we have studied thus far, but with a slightly different payoff function. In particular, the payoff $P'(j,\sigma)$ for given Hider and Searcher strategies $j$ and $\sigma$ is given by
\[
P'(j,\sigma) \equiv \frac{P(j,\sigma)}{\sqrt{p_j}} \equiv \sqrt{p_j} \prod_{\{i: i < \sigma^{-1}(j)\}} p_{\sigma(i)}.
\]
We denote this new game played on a graph $G$ by $\Gamma'(G)$. 

Similarly to the game $\Gamma(C_n)$, for each vertex $j \in[n]$, we write $\pi'_j$ for the product $p_1  \cdots p_{j-1} \sqrt{p_j}$ and we write $\bar{\pi}'_j$ for the product $\sqrt{p_j} p_{j+1} \cdots p_n$. Note that $\pi'_j \bar{\pi}'_j = \pi_n$ for any $j$. As before, $\pi'_j$ is decreasing in $j$ and $\bar{\pi}'_j$ is increasing in $j$. Also, $\pi'_1 > \bar{\pi}'_1$ and $\pi'_n < \bar{\pi}'_n$. So there exists a unique $j\in[n]$ such that $\pi'_j \ge \bar{\pi}'_j$ and $\pi'_{j+1} < \bar{\pi}'_{j+1}$.

\begin{lemma}
Let $j$ be such that $\pi'_j \ge \bar{\pi}'_j$ and $\pi'_{j+1} < \bar{\pi}'_{j+1}$. Then there is an optimal Searcher strategy for $\Gamma'(C_n)$ that chooses every strategy except possibly $\sigma^C$ and $\sigma^A$ with probability 0. There is an optimal Hider strategy that chooses every strategy except $j$ and $j+1$ with probability 0.
\end{lemma}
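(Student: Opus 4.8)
The plan is to reduce, exactly as in the proof of Lemma~\ref{lem:j=1}, to the $2\times 2$ game between the Hider strategies $\{j,j+1\}$ and the Searcher strategies $\{\sigma^C,\sigma^A\}$, and then to show that the equalizing optimal strategies of this small game remain optimal in the full game $\Gamma'(C_n)$. First I would record the four relevant payoffs, namely $P'(j,\sigma^C)=\pi'_j$, $P'(j,\sigma^A)=\bar{\pi}'_j$, $P'(j+1,\sigma^C)=\pi'_{j+1}$ and $P'(j+1,\sigma^A)=\bar{\pi}'_{j+1}$, so that the hypotheses $\pi'_j\ge\bar{\pi}'_j$ and $\pi'_{j+1}<\bar{\pi}'_{j+1}$ say precisely that $\sigma^C$ is the better response to $j$ and $\sigma^A$ the better response to $j+1$. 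Let $h$ be the Hider strategy on $\{j,j+1\}$ that makes the Searcher indifferent between $\sigma^C$ and $\sigma^A$, let $s$ be the Searcher strategy placing probability $r$ on $\sigma^C$ and $1-r$ on $\sigma^A$ that makes the Hider indifferent between $j$ and $j+1$, and let $V'$ denote the resulting common payoff; rearranging the indifference equation shows $r\in(0,1)$.

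Next I would show that $h$ guarantees the Searcher at most $V'$ in the full game, reusing the weak-domination observation from Lemma~\ref{lem:j=1}. For any Searcher strategy $\sigma$, the searched region is always a connected arc of the cycle containing the root, so $\sigma$ reaches exactly one of $j,j+1$ first. If $\sigma$ reaches $j$ before $j+1$, then the vertices searched before $j$ include $0,1,\ldots,j-1$ and those searched before $j+1$ include $0,1,\ldots,j$; since every $p_i<1$, this yields $P'(j,\sigma)\le\pi'_j=P'(j,\sigma^C)$ and $P'(j+1,\sigma)\le\pi'_{j+1}=P'(j+1,\sigma^C)$, so $P'(h,\sigma)\le P'(h,\sigma^C)=V'$. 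The symmetric argument with $\sigma^A$ handles the case where $j+1$ is reached first. Hence $P'(h,\sigma)\le V'$ for every $\sigma$, so the value of $\Gamma'(C_n)$ is at most $V'$ and $h$ is optimal.

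The crux is to show that $s$ guarantees the Searcher at least $V'$ against every Hider pure strategy, and this is where the normalization by $\sqrt{p_j}$ does its work. Against a single vertex $i$ the expected payoff under $s$ is $r\pi'_i+(1-r)\bar{\pi}'_i$, and the identity $\pi'_i\bar{\pi}'_i=\pi_n$ lets me rewrite this as $g(\pi'_i)$, where $g(x)=rx+(1-r)\pi_n/x$. Since $0<r<1$ and $\pi_n>0$, the function $g$ is strictly convex on $(0,\infty)$. By construction $s$ equalizes the two payoffs, so $g(\pi'_j)=g(\pi'_{j+1})=V'$, and therefore the minimizer of $g$ lies strictly between $\pi'_{j+1}$ and $\pi'_j$. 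Because $\pi'_i$ is strictly decreasing in $i$, every $i\neq j,j+1$ has $\pi'_i$ lying outside the closed interval $[\pi'_{j+1},\pi'_j]$, whence strict convexity gives $g(\pi'_i)\ge V'$. Thus $s$ guarantees at least $V'$, the value equals $V'$, and $s$ is optimal.

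The main point, and the only step that is genuinely new relative to Lemma~\ref{lem:j=1}, is this convexity argument. In the unnormalized game the analogous lower bound required the explicit hypothesis~(\ref{eq:cond1}); here the substitution $\bar{\pi}'_i=\pi_n/\pi'_i$ collapses the Searcher's payoff into a single convex function of $\pi'_i$ whose minimizer is automatically sandwiched between $\pi'_{j+1}$ and $\pi'_j$, so the inequality holds for free and no side condition is needed. The only remaining things to confirm are that the searched region is always an arc (legitimizing the domination step) and that $r\in(0,1)$, both of which are routine.
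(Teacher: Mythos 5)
Your proof is correct and follows essentially the same route as the paper: reduce to the $2\times 2$ game on $\{j,j+1\}\times\{\sigma^C,\sigma^A\}$, bound the value from above via the Hider's equalizing strategy together with the arc/domination observation, and from below via the convexity of $g(x)=rx+(1-r)\pi_n/x$ with $g(\pi'_j)=g(\pi'_{j+1})=V'$, exactly the paper's key device. The only differences are cosmetic: the paper splits off the degenerate case $\pi'_j=\bar{\pi}'_j$ (pure Hider at $j$, Searcher mixing $1/2$--$1/2$), which your indifference-based construction absorbs uniformly (the Searcher-indifference weight there is just $q=1$), and you spell out the strict monotonicity of $\pi'_i$ that the paper's convexity step uses implicitly.
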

\begin{proof}
First suppose  $\pi'_j = \bar{\pi}'_j = \sqrt{\pi_n}$. In this case, consider the Hider strategy that chooses $j$ with probability 1. This guarantees a payoff of at most $\pi'_j=\bar{\pi}'_j$. 

Consider the Searcher strategy $s$ that chooses each of $\sigma^C$ and $\sigma^A$ with probability $1/2$. For $\pi_n < x < 1$, let $f(x)= x/2 + \pi_n/(2x)$, which has a minimum at $x=\pi'_j$. Then for any Hider strategy $i \neq j$, the expected payoff when the Searcher plays $s$ is
\[
P'(s,i)=(1/2) \pi'_i + (1/2)\bar{\pi}'_i = f(\pi'_i) \ge f(\pi'_j) = P'(s,j).
\]
So the strategy $s$ guarantees a payoff of at least $P'(s,j)=\pi'_j$.

Now suppose $\pi'_j > \bar{\pi}'_j$. Consider the $2\times 2$ game whose payoffs are given by $P'$, where the Searcher's strategy set is $\{\sigma^C,\sigma^A\}$ and the Hider's strategy set is $\{j,j+1\}$. Then it is easy to verify that both players have unique optimal strategies in this game where they play each of their pure strategies with positive probability. 

Let $s$ denote the optimal Searcher strategy and $h$ the optimal Hider strategy in this $2 \times 2$ game. Since each of the players' pure strategies must be best responses to the optimal strategy of the other player, the value $v'$ of this game is given by
\[
v' = P'(s,j)= P'(s,j+1).
\]
Clearly, the Hider strategy $h$ also guarantees an expected payoff of at most $v$ in the game $\Gamma'$, since $\sigma^C$ and $\sigma^A$ are the only best responses to $h$ in $\Gamma'$. We will show that the Searcher strategy $s$ guarantees an expected payoff of at least $v$ in $\Gamma'$.

Let $\beta$ be the probability that the Searcher uses the strategy $\sigma^C$ in $s$. For $\pi_n < x < 1$, let $g(x) = \beta x + (1-\beta) \pi_n/x$. Note that for any Hider strategy $i$, the payoff under $s$ is $P'(i,s) = g(\pi'_i)$.  Since $g$ is a convex function of $x$ and $g(\pi'_j)=g(\pi'_{j+1})=v$, it must be the case that $g(\pi'_i) \ge g(\pi'_j)=v$ for all $i \in [n]$. It follows that $s$ is also optimal in $\Gamma'$. 
\end{proof}

We now show how the solution of the game $\Gamma'$ can be exploited to give approximately optimal solutions to $\Gamma$.
\begin{proposition} \label{prop:approx}
Let $j$ be such that $\pi'_j \ge \bar{\pi}'_j$ and $\pi'_{j+1} < \bar{\pi}'_{j+1}$. Let $\alpha = \max \{\sqrt{p_j},\sqrt{p_{j+1}}\}/\min_i \sqrt{p_i}$ and let $v$ be the value of the game $\Gamma(C_n)$. Then any optimal Searcher  strategy for $\Gamma'(C_n)$ guarantees an expected payoff of at least $v/\alpha$ in $\Gamma(C_n)$ and any optimal Hider strategy for $\Gamma'(C_n)$ guarantees an expected payoff of at most $\alpha v$ in $\Gamma(C_n)$.
\end{proposition}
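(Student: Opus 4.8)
The plan is to exploit the pointwise identity $P(i,\sigma)=\sqrt{p_i}\,P'(i,\sigma)$, valid for every vertex $i$ and every pure Searcher strategy $\sigma$, which extends by linearity to $P(s,i)=\sqrt{p_i}\,P'(s,i)$ for a mixed Searcher strategy $s$ and to $P(\sigma,h)=\sum_i h(i)\,\sqrt{p_i}\,P'(i,\sigma)$ for a mixed Hider strategy $h$. Writing $v'$ for the value of $\Gamma'(C_n)$, the two assertions will both follow once I have sandwiched $v'$ by
\[
\frac{v}{\max\{\sqrt{p_j},\sqrt{p_{j+1}}\}}\le v'\le\frac{v}{\min_k\sqrt{p_k}},
\]
so the real content is proving these two inequalities, each obtained by transferring an optimal strategy from one game into the other.

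For the Searcher assertion, let $s^{*}$ be any optimal strategy for $\Gamma'(C_n)$, so that $P'(s^{*},i)\ge v'$ for all $i$. The identity then gives $P(s^{*},i)=\sqrt{p_i}\,P'(s^{*},i)\ge(\min_k\sqrt{p_k})\,v'$, and combined with the lower bound $v'\ge v/\max\{\sqrt{p_j},\sqrt{p_{j+1}}\}$ this yields $P(s^{*},i)\ge v/\alpha$ for every $i$, as required. To establish that lower bound on $v'$ I would take an optimal Hider strategy $h^{*}$ for $\Gamma'(C_n)$; by the previous lemma it is supported on $\{j,j+1\}$, so for every pure $\sigma$ the identity gives $P(\sigma,h^{*})=\sum_{i\in\{j,j+1\}}h^{*}(i)\sqrt{p_i}\,P'(i,\sigma)\le\max\{\sqrt{p_j},\sqrt{p_{j+1}}\}\,P'(\sigma,h^{*})\le\max\{\sqrt{p_j},\sqrt{p_{j+1}}\}\,v'$. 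Thus $h^{*}$ guarantees at most $\max\{\sqrt{p_j},\sqrt{p_{j+1}}\}\,v'$ against any Searcher, forcing $v\le\max\{\sqrt{p_j},\sqrt{p_{j+1}}\}\,v'$.

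For the Hider assertion, let $h^{*}$ be any optimal Hider strategy for $\Gamma'(C_n)$. The same computation as above (again using that $h^{*}$ is supported on $\{j,j+1\}$) shows $P(\sigma,h^{*})\le\max\{\sqrt{p_j},\sqrt{p_{j+1}}\}\,v'$ for every $\sigma$, so it remains to bound $v'$ above by $v/\min_k\sqrt{p_k}$. For this I would feed an optimal Hider strategy $h$ of $\Gamma(C_n)$ into $\Gamma'(C_n)$: from $P'(i,\sigma)=P(i,\sigma)/\sqrt{p_i}\le P(i,\sigma)/\min_k\sqrt{p_k}$ we get $P'(\sigma,h)\le P(\sigma,h)/\min_k\sqrt{p_k}\le v/\min_k\sqrt{p_k}$ for every $\sigma$, hence $v'\le v/\min_k\sqrt{p_k}$. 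Combining, $P(\sigma,h^{*})\le\alpha v$ for all $\sigma$.

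The one step needing genuine care---and the reason $\alpha$ carries the localized factor $\max\{\sqrt{p_j},\sqrt{p_{j+1}}\}$ rather than the cruder $\max_k\sqrt{p_k}$---is the appeal to the previous lemma that every optimal Hider strategy of $\Gamma'(C_n)$ is supported on $\{j,j+1\}$. To justify applying this to an arbitrary optimal $h^{*}$ (not merely the particular one constructed in that lemma), I would note that in a zero-sum game every optimal Hider strategy is a best response to the optimal Searcher strategy $s^{*}$ identified there, and the strict convexity of the function $g$ appearing in that proof shows $j$ and $j+1$ are the only minimizers of $P'(s^{*},\cdot)$; hence every optimal Hider strategy is supported on $\{j,j+1\}$. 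The asymmetry in the sandwich is then forced: the numerator of $\alpha$ stays confined to $\{j,j+1\}$ precisely because of this localized support, whereas the denominator $\min_k\sqrt{p_k}$ is unavoidably global, since when transferring the Searcher strategy the vertex $i$ ranges over all of $[n]$. Keeping track of which transfer needs the localized support and which needs the global extreme is the only real bookkeeping hazard.
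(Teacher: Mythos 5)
Your proof is correct and follows essentially the same route as the paper: both rest on the pointwise identity $P(i,\sigma)=\sqrt{p_i}\,P'(i,\sigma)$ and the sandwich $\min_i\sqrt{p_i}\,v'\le v\le\max\{\sqrt{p_j},\sqrt{p_{j+1}}\}\,v'$, obtained by transferring optimal strategies between $\Gamma$ and $\Gamma'$ (your derivation of $v'\le v/\min_k\sqrt{p_k}$ via an optimal Hider strategy of $\Gamma$ is a harmless variant of the paper's transfer of the Searcher strategy, which already yields the same inequality). One genuine improvement: the proposition claims the bound for \emph{any} optimal strategies of $\Gamma'$, and the paper's proof simply asserts that the support of an optimal Hider strategy lies in $\{j,j+1\}$, even though the preceding lemma only establishes the \emph{existence} of such an optimal strategy; your argument---that the support of every optimal Hider strategy must consist of best responses to a fixed optimal Searcher strategy, and that strict convexity of $g$ (with $\beta\in(0,1)$, or strict convexity of $f$ in the degenerate case $\pi'_j=\bar{\pi}'_j$) together with the strict monotonicity of $i\mapsto\pi'_i$ forces those best responses to lie in $\{j,j+1\}$---closes this small gap that the paper leaves implicit.
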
 
\begin{proof}
Let $s$ and $h$ be optimal Searcher and Hider strategies in $\Gamma'$. Since $P'(i,\sigma) = P(i,\sigma)/\sqrt{p_i}$ for any $\sigma$ and $i$, the Searcher can ensure a payoff of at least $\min_i \sqrt{p_i}v'$ in $\Gamma$ by using $s$. The Hider can ensure a payoff of at most $\max\{\sqrt{p_j},\sqrt{p_{j+1}}\} v'$ by using $h$, because the support of $h$ is contained in $\{j,j+1\}$. Therefore,
\[
\min_i \sqrt{p_i}v' \le v \le \max\{\sqrt{p_j},\sqrt{p_{j+1}}\} v'.
\]
It follows that the strategy $s$ ensures the payoff is at least 
\[
\min_i \sqrt{p_i}v' \ge \min_i \sqrt{p_i} \cdot \frac{v}{\max\{\sqrt{p_j},\sqrt{p_{j+1}}\}} = \frac{v}{\alpha}.
\]
Similarly, the Hider strategy $h$ ensures the payoff is at most
\[
\max\{\sqrt{p_j},\sqrt{p_{j+1}}\} v' \le \max\{\sqrt{p_j},\sqrt{p_{j+1}}\} \cdot \frac{v}{\min_i \sqrt{p_i}} = \alpha v
\]
\end{proof}

Note that Proposition~\ref{prop:approx} provides an alternative proof of Theorem~\ref{thm:equal-p}. Moreover, if $p_j=p_{j+1} \le p_i$ for all $i \in [n]$, then Proposition~\ref{prop:approx} gives optimal strategies for $\Gamma$. Finally, since $\max\{\sqrt{p_j},\sqrt{p_{j+1}}\} \le 1$, it is always true that $\alpha \le 1/\min_j \sqrt{p_j}$, so if all the probabilities are at least $\beta$, for some $\beta$, then an optimal Searcher strategy for $\Gamma'$ ensures a payoff of at least $\beta v$ in~$\Gamma'$.

\section{Multiple targets with equal detection probabilities}
\label{sec:multiple}

We now consider the search and rescue game in the case that the number $k$ of targets is greater than~1. Of course this complicates the game, but in Subsection~\ref{sec:non-adapt}, we explain how the the solution for the case $k=n-1$ follows easily from the previous work of \cite{Lidbetter20}.

We then consider a variation on the game in Subsection~\ref{sec:adapt} in which we enlarge the Searcher's strategy set to allow her to use {\em adaptive} search strategies, where at any time she can choose which vertex to search next based on information gathered up to that time.  To distinguish this variation of the game from the original, we call the original {\em non-adaptive search}. We focus on the case of equal detection probabilties.

\subsection{Non-adaptive search} \label{sec:non-adapt}

The solution of the non-adaptive game appears elusive in general, but in the special case of $k=n-1$, it follows from previous work. Indeed, for this case, a Searcher strategy is completely specified by the final vertex to be searched, since the order of search of the first $k$ vertices does not matter. It follows that the network structure of the problem provides no restriction to the Searcher's strategy set, so that the solution of the game on a cycle follows from the solution of the game with no network structure, as given in \cite{Lidbetter20}.  We summarize this observation below.

\begin{proposition}
    The solution of the search and rescue game with $k=n-1$ targets played on $C_n$ is given by Theorem~\ref{thm:Lidbetter}.
\end{proposition}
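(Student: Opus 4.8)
The plan is to show that when $k=n-1$ the expanding-search constraint imposed by the cycle is vacuous, so that the game collapses exactly onto the graph-free game already solved in Theorem~\ref{thm:Lidbetter}. First I would record the key reduction. Since the Hider occupies $n-1$ of the $n$ non-root vertices, a pure Hider strategy $H$ is equivalent to naming the single unoccupied vertex $m = S \setminus H$. For any Searcher permutation $\sigma$, the payoff is the product of the success probabilities up to and including the last occupied vertex searched; if the vertex in the final position is occupied, this product equals $\prod_{i=1}^n p_i$, while if the final vertex is $m$ itself then the last target is found at step $n-1$ and the product is $\prod_{i \ne m} p_i = (\prod_{i=1}^n p_i)/p_m$. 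In both cases the payoff depends only on which vertex is searched \emph{last}, and not at all on the order of the first $n-1$ vertices (the product is over a set, and multiplication is commutative).

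The second step, the only point at which the cycle structure enters, is to verify that an expanding search on $C_n$ can terminate at any prescribed vertex $\ell \in [n]$. This is immediate from an explicit order: search $0,1,\ldots,\ell-1$ clockwise, then $n,n-1,\ldots,\ell+1$ anticlockwise, and finally $\ell$. Every vertex after the root is adjacent to one already searched, and $\ell$ occupies the last position (the boundary cases $\ell=1$ and $\ell=n$ reduce to a single directed sweep). Hence the Searcher can realize \emph{any} vertex as the terminal vertex of a legal expanding search, so the payoff-relevant content of a Searcher strategy on the cycle, namely its final vertex, coincides exactly with the payoff-relevant content of a strategy in the graph-free game for $k=n-1$.

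With these two observations the proposition follows by matching bounds. By Lemma~\ref{lem:upperbound}, the Hider strategy of Theorem~\ref{thm:Lidbetter} already caps the value of $\Gamma(C_n)$ at the graph-free value. For the reverse inequality I would have the Searcher mimic the optimal graph-free strategy: select the terminal vertex $m$ with probability $q_{S \setminus \{m\}}$ and search the remaining vertices in any expanding-search order ending at $m$. Because the payoff is insensitive to the order of the first $n-1$ vertices, this reproduces precisely the expected payoff of the graph-free game, so the Searcher guarantees at least the graph-free value. The two bounds coincide, and the optimal strategies and value are exactly those of Theorem~\ref{thm:Lidbetter}.

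The \textbf{main obstacle} here is conceptual rather than computational: the work is entirely in recognizing that for $k=n-1$ the payoff collapses to a function of the terminal vertex alone. Once that is seen, checking that the cycle places no restriction on which vertex can be terminal is routine, and no delicate calculation is required.
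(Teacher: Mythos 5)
Your proof is correct and takes essentially the same approach as the paper, whose entire argument is the observation that for $k=n-1$ the payoff depends only on the final vertex searched, so the cycle imposes no restriction on the Searcher and the graph-free solution of Theorem~\ref{thm:Lidbetter} carries over. You simply make explicit two steps the paper leaves implicit: the construction of an expanding search on $C_n$ terminating at any prescribed vertex, and the matching of the upper bound from Lemma~\ref{lem:upperbound} with the lower bound obtained by mimicking the graph-free optimal Searcher strategy.
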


\subsection{Adaptive search} \label{sec:adapt}

We now turn to the adaptive case, assuming that all the detection probabilities are equal to some~$p$. We represent a Hider strategy by a $k$-tuple $(v_1,\ldots,v_k)$ such that $1 \le v_1<v_2 < \ldots < v_k \le n$, so that $v_j$ is the location of the $j$th target. Let $\mathcal{H}_k$ be the set of all such $k$-tuples. Let $v_0=0$ and $v_{k+1}=n+1$, and for $n \ge k$ let 
\[
S_{n,k} = \{(v_1,\dots,v_k) \in \mathcal{H}_k: v_{i+1}-v_i = \floor{(n+1)/(k+1)} \text{ or } \ceil{(n+1)/(k+1)}, i=0,\ldots,k\}. 
\]
Also let 
\[
S_{n,k}^- = \{(v_1,\ldots,v_k) \in S_{n,k}: v_1=\floor{(n+1)/(k+1)} \}
\]
and
\[
S_{n,k}^+ = \{(v_1,\ldots,v_k) \in S_{n,k}: v_1=\ceil{(n+1)/(k+1)} \}.
\]
Note that $S_{n,k}^-$ and $S_{n,k}^+$ partition $S_{n,k}$ unless $n+1$ is divisible by $k+1$, in which case 
\[
S_{n,k}^-=S_{n,k}^+=S_{n,k} = \{((n+1)/(k+1),2(n+1)/(k+1),\ldots , k(n+1)/(k+1)) \}.
\]
Let $s_{n,k}=|S_{n,k}|$ be the cardinality of $S_{n,k}$; also let $s_{n,k}^-=|S_{n,k}^-|$ and $s_{n,k}^+ = |S_{n,k}^+|$. We have already seen that if $n+1$ is divisible by $k+1$ then $s_{n,k} = 1$ since $S_{n,k}$ is a singleton. If $n+1$ is not divisible by $k+1$ then we can write $n=a(k+1)+b$ where $a$ and $b$ are non-negative integers and $b \le k$. In this case $\floor{(n+1)/(k+1)}=a$ and $\ceil{(n+1)/(k+1)} = a+1$. The following relation is immediate from the definition of $S_{n,k}$.
\begin{align}
s_{n,k}^- = s_{n-a,k-1} \text{ and } s_{n,k}^+ = s_{n-a-1,k-1}. \label{eq:recursion}
\end{align}
\begin{lemma} \label{lem:counting}
Suppose $n=a(k+1)+b$, where $a$ and $b$ are positive integers. If $b \le k$, then
\[
s_{n,k}={k+1 \choose b+1}.
\]
If $b\le k-1$, then 
\[
s_{n,k}^-= {k \choose b+1} \text{ and } s_{n,k}^+ = {k \choose b}.
\]
\end{lemma}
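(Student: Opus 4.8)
The plan is to identify each tuple in $S_{n,k}$ with its sequence of consecutive gaps and thereby reduce the whole statement to counting binary strings. Writing $g_i = v_{i+1}-v_i$ for $i=0,\ldots,k$ (with the conventions $v_0=0$, $v_{k+1}=n+1$ already fixed in the text), a tuple lies in $S_{n,k}$ precisely when every $g_i$ equals either $\lfloor (n+1)/(k+1)\rfloor$ or $\lceil (n+1)/(k+1)\rceil$. Conversely, any sequence $(g_0,\ldots,g_k)$ of such values with $\sum_{i=0}^{k} g_i = n+1$ determines a unique tuple through $v_i = g_0+\cdots+g_{i-1}$, and since each $g_i \ge a \ge 1$ the resulting $v_i$ automatically satisfy $1 \le v_1 < \cdots < v_k \le n$. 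Thus $S_{n,k}$ is in bijection with the set of admissible gap sequences, and it remains only to count these.

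First I would treat the generic case $b \le k-1$, where $\lfloor (n+1)/(k+1)\rfloor = a$ and $\lceil (n+1)/(k+1)\rceil = a+1$ are distinct, so a gap sequence amounts to choosing, for each of the $k+1$ positions, a ``small'' gap $a$ or a ``large'' gap $a+1$. If $m$ denotes the number of large gaps, then $\sum_{i=0}^{k} g_i = a(k+1)+m$, and equating this with $n+1 = a(k+1)+(b+1)$ forces $m = b+1$. Hence $s_{n,k}$ equals the number of ways to place $b+1$ large gaps among $k+1$ positions, namely $\binom{k+1}{b+1}$. For the signed counts I would partition according to the first gap $g_0$: the tuples with $v_1 = a$ (that is, $g_0$ small) have their $b+1$ large gaps among the remaining $k$ positions, giving $s_{n,k}^- = \binom{k}{b+1}$, while those with $v_1 = a+1$ (that is, $g_0$ large) have $b$ large gaps left among $k$ positions, giving $s_{n,k}^+ = \binom{k}{b}$.

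The only delicate point, and the one place I would be careful, is the boundary case $b = k$: here $n+1 = (a+1)(k+1)$ is divisible by $k+1$, the floor and ceiling coincide at $a+1$, and every gap is forced equal to $a+1$. The bijection then yields exactly the single tuple already noted in the text, consistent with $\binom{k+1}{b+1} = \binom{k+1}{k+1} = 1$. Since the assertions for $s_{n,k}^-$ and $s_{n,k}^+$ are only made for $b \le k-1$, they live entirely in the generic case and need no separate treatment. As a final consistency check one can observe that the recursion $s_{n,k}^- = s_{n-a,k-1}$, $s_{n,k}^+ = s_{n-a-1,k-1}$ together with Pascal's identity $\binom{k}{b+1}+\binom{k}{b} = \binom{k+1}{b+1}$ reproduces $s_{n,k} = s_{n,k}^- + s_{n,k}^+$, offering an alternative inductive route; but the direct bijective count seems cleanest and avoids setting up an induction altogether.
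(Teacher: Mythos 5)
Your proof is correct, but it takes a genuinely different route from the paper. The paper proves the lemma by induction on $k$, using the recursion $s_{n,k}^- = s_{n-a,k-1}$, $s_{n,k}^+ = s_{n-a-1,k-1}$ stated just before the lemma, together with the base case $k=1$, $b=0$ and Pascal's identity ${k \choose b+1}+{k \choose b}={k+1 \choose b+1}$ --- precisely the ``alternative inductive route'' you sketch in your last sentence. You instead give a direct bijective count: encoding each tuple in $S_{n,k}$ by its gap sequence $(g_0,\ldots,g_k)$ with $\sum_i g_i = n+1$, observing that in the generic case $b \le k-1$ exactly $b+1$ of the $k+1$ gaps must equal $a+1$, and reading off $s_{n,k}={k+1 \choose b+1}$ at once, with $s_{n,k}^-$ and $s_{n,k}^+$ obtained by conditioning on whether $g_0$ is small or large. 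Your verification that the bijection is well defined (each $g_i \ge a \ge 1$ forces $1 \le v_1 < \cdots < v_k \le n$) and your handling of the boundary case $b=k$ are both sound. The bijective argument is arguably cleaner: it explains \emph{why} binomial coefficients appear and dispatches all three formulas in one stroke, whereas the paper's induction requires the separate case analysis $b \ge 1$ versus $b=0$ when applying the recursion to $s_{n,k}^+$. What the paper's route buys instead is that it exercises the recursion~(\ref{eq:recursion}), which is the structural fact the paper has just recorded; your approach makes that recursion a corollary (via Pascal) rather than an ingredient. Either proof is complete and valid.
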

\begin{proof}
We have already shown that $s_{n,k}=1={k+1 \choose b+1}$ for $b=k$ so we restrict our attention to the case $b \le k-1$, proving the lemma by induction on $k$. If $k=1$ and $b=0$ then $n$ is even, and $S_{n,k} = \{(n/2),(n/2+1)\}$. The expressions for $s_{n,k},s_{n,k}^-$ and $s_{n,k}^+$ are easy to verify.

Now suppose $k \ge 2$ and that the lemma is true for all smaller values of $k$. Then $n-a=ak+b$ where $0 \le b \le k-1$, so by~(\ref{eq:recursion}) and the induction hypothesis,
\[
s_{n,k}^- = s_{n-a,k-1} = {k \choose b+1}.
\]
Similarly, $n-a-1 = ak + (b-1)$ where $-1 \le b-1 \le k-2$. If $b \ge 1$, then
\[
s_{n,k}^+ = s_{n-a-1,k-1} = {k \choose (b-1)+1} = {k \choose b}.
\]
If $b=0$, then we write $n-a-1 = (a-1)k+ (k-1)$, so that
\[
s_{n,k}^+ = s_{n-a-1,k-1} = {k \choose (k-1)+1} = 1 = {k \choose b}.
\]
Finally, we note that
\[
s_{n,k} = s_{n,k}^- + s_{n,k}^+ = {k \choose b+1}+{k \choose b} = {k+1 \choose b+1}.
\]
\end{proof}

We can now describe the optimal Hider strategy. The Hider simply chooses from each of the strategies in $S_{n,k}$ with equal probability. An important property of this strategy is that after the Searcher finds the first target, the remaining targets are hidden optimally among the unsearched vertices. 

There are only $k+1$ (weakly) undominated Searcher strategies, which we denote $\sigma_j, ~j=0,1,\ldots,k$. Strategy $\sigma_j$ searches the vertices in a clockwise direction until finding $j$ targets, then searches the vertices in an anticlockwise direction, starting from the root. Note that $\sigma_k$ is equivalent to $\sigma^C$ and $\sigma_0$ is equivalent to $\sigma^A$. To illustrate why any other Searcher strategy would be weakly dominated, first note that it is evident that any Searcher strategy that is not weakly dominated can be specified by two sequences $x_1,\ldots,x_t$ and $y_1,\ldots,y_t$ taking values in $0,1\ldots,k$ such that $\sum_{i=1}^t x_i+ y_i = k$, the interpretation being that Searcher goes clockwise until finding $x_1$ targets, then anticlockwise until finding $y_1$ targets, then clockwise until finding $x_2$ targets, and so on. But such a strategy has the same payoff against any Hider strategy as the strategy $\sigma_j$, where $j=\sum_{i=1}^t x_i$.

\begin{theorem}
Let $n=a(k+1)+b$ where $a$ and $b$ are non-negative integers and $b \le k$. The value of the game =$\Gamma(C_n)$ with an adaptive Searcher, equal detection probabilities and $k$ targets is
\begin{align}
\left( \frac{k-b}{k+1} \right) p ^{n-a+1} + \left( \frac{b+1}{k+1} \right) p ^{n-a} \label{eq:k-value}
\end{align}
An optimal strategy for the Searcher is to choose equiprobably between the strategies $\sigma_0,\sigma_1,\ldots,\sigma_k$. An optimal strategy for the Hider is to choose equiprobably between the strategies in $S_{n,k}$.
\end{theorem}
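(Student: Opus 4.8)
The plan is to certify the claimed value $V$ by proving two matching bounds: that the Searcher's strategy $s$ (equiprobable over $\sigma_0,\ldots,\sigma_k$) guarantees an expected payoff of at least $V$ against every pure Hider strategy, and that the Hider's strategy $h$ (equiprobable over $S_{n,k}$) holds the expected payoff down to at most $V$ against every Searcher strategy. Since this is a finite zero-sum game, matching bounds simultaneously establish the value and the optimality of $s$ and $h$. The key preliminary step is a clean payoff formula expressed through the \emph{gaps} of a Hider strategy: given $H=(v_1,\ldots,v_k)$, set $v_0=0$, $v_{k+1}=n+1$ and $g_i=v_i-v_{i-1}$ for $i=1,\ldots,k+1$, so that $g_i\ge 1$ and $\sum_{i=1}^{k+1}g_i=n+1$. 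I would show that $P(\sigma_j,H)=p^{n+1-g_{j+1}}$: after finding $j$ targets clockwise the Searcher has reached vertex $v_j$, and then searching anticlockwise she finds the last remaining target at $v_{j+1}$, so the vertices searched are $\{0,1,\ldots,v_j\}\cup\{v_{j+1},\ldots,n\}$; since $p_0=1$ their success probabilities multiply to $p^{v_j+(n-v_{j+1}+1)}=p^{n+1-(v_{j+1}-v_j)}$, with the cases $j=0$ and $j=k$ checked the same way.

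For the lower bound, the strategy $s$ yields expected payoff $\tfrac{1}{k+1}\sum_{i=1}^{k+1}p^{n+1-g_i}$ against $H$. The function $g\mapsto p^{n+1-g}$ is strictly convex, since its second derivative is $p^{n+1-g}(\ln p)^2>0$. Hence, among integer gap vectors with $g_i\ge 1$ summing to $n+1$, the quantity $\sum_i p^{n+1-g_i}$ is minimized by the most balanced vector: if two gaps differed by at least $2$, shifting one unit from the larger to the smaller would strictly decrease the sum. The balanced vectors are exactly the elements of $S_{n,k}$, each having $b+1$ gaps equal to $a+1$ and $k-b$ gaps equal to $a$, for which the expected payoff is $\tfrac{k-b}{k+1}p^{n+1-a}+\tfrac{b+1}{k+1}p^{n-a}=V$. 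Thus $s$ guarantees at least $V$ against every pure Hider strategy.

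For the upper bound, the domination discussion preceding the theorem reduces the maximization over all (adaptive) Searcher strategies to the finitely many candidates $\sigma_0,\ldots,\sigma_k$, so it suffices to compute $P(h,\sigma_j)$ for each $j$. Using uniformity of $h$ on $S_{n,k}$ and the gap formula, $P(h,\sigma_j)=\tfrac{1}{s_{n,k}}\sum_{H\in S_{n,k}}p^{n+1-g_{j+1}(H)}$, where each $g_{j+1}(H)\in\{a,a+1\}$. The main computation is to count, for the fixed gap slot $j+1$, how many elements of $S_{n,k}$ have that gap equal to $a+1$: placing the remaining $b$ large gaps among the other $k$ slots gives $\binom{k}{b}$ such elements, so by Lemma~\ref{lem:counting} the fraction with slot $j+1$ large is $\binom{k}{b}/\binom{k+1}{b+1}=\tfrac{b+1}{k+1}$, independent of $j$. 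Therefore $P(h,\sigma_j)=\tfrac{k-b}{k+1}p^{n+1-a}+\tfrac{b+1}{k+1}p^{n-a}=V$ for every $j$, so $h$ guarantees at most $V$, and the two bounds coincide.

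I expect the main obstacle to be the convexity and exchange argument in the lower bound: turning the statement ``balanced gaps minimize $\sum_i p^{n+1-g_i}$'' into a rigorous majorization over integer gap vectors, while checking that the constraint $g_i\ge 1$ (which binds only in the degenerate case $a=0$, forcing $b=k$ and all gaps equal to $1$) causes no trouble. By contrast, the upper bound is a short binomial count built directly on Lemma~\ref{lem:counting}, and the reduction to the strategies $\sigma_0,\ldots,\sigma_k$ has already been justified in the preceding text, so those steps require only bookkeeping rather than new ideas.
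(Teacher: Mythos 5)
Your proposal is correct and follows essentially the same route as the paper: the same gap-based payoff formula $p^{n+1-g_{j+1}}$, the same exchange/convexity argument showing that a minimizing Hider must use balanced gaps (the paper phrases this as a best-response contradiction with $|n_i-n_j|\le 1$), and the same counting via Lemma~\ref{lem:counting} for the Hider's guarantee. The only cosmetic difference is that the paper computes the payoff of $h$ against $\sigma_0$ alone and asserts by symmetry that all $\sigma_j$ give the same value, whereas you verify this directly for every $j$ with the count $\binom{k}{b}/\binom{k+1}{b+1}=\frac{b+1}{k+1}$ --- a slightly more explicit version of the same step.
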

\begin{proof}
From the construction of the Hider strategy, it is clear that any Searcher strategy $\sigma_j$ will win with the same probability. So we calculate the expected payoff of $\sigma_0=\sigma^A$. The payoff depends on whether $v_1$ is equal to $\floor{(n+1)/(k+1)}=a$ or $\ceil{(n+1)/(k+1)}=a+1$. By Lemma~\ref{lem:counting}, the probability that $v_1$ is equal to $a$ is $(k-b)/(k+1)$ and the probability $v_1$ is equal to $a+1$ is $(b+1)/(k+1)$. It follows that the expected payoff is given by Equation~(\ref{eq:k-value}).

Now consider a fixed Hider strategy $(v_1,\ldots,v_k)$ that is a best response to the Searcher strategy $s$ described in the statement of the theorem. Let $n_j=v_{j +1}- v_{j}$ for $j=0,1,\ldots,k$, where $v_0=0$ and $v_{k+1}=n+1$. The expected payoff of this Hider strategy against $s$ is
\begin{align}
\frac{1}{k+1}\sum_{j=0}^{k} p^{n-n_j+1}. \label{eq:exppayoff}
\end{align}
We claim that $|n_i-n_j| \le 1$ for all $i,j$. Suppose this is not true, and that $n_i \ge n_j+2$ for some $i,j$. Without loss of generality, suppose that $i < j$. Then consider the Hider strategy obtained by moving targets $(i+1)$ through $(j+1)$ one vertex anticlockwise. In this case, the values of $n_0,\ldots,n_k$ all stay the same except for $n_i$ which decreases by 1 and $n_j$ which increases by 1.  Therefore, the difference between the expected payoffs of the original strategy and the new strategy is
\[
\frac{1}{k+1}(p^{n-n_i+1} + p^{n-n_j+1} - p^{n-n_i+2} - p^{n-n_j}) = \frac{p^{n-n_i+1}}{k+1}(1 + p^{n_i-n_j-1})(1-p) > 0.
\]
This contradicts $(v_1,\ldots,v_k)$ being a best response to $s$.  So we have established that $|n_i-n_j| \le 1$ for all $i,j$. Since $\sum_{j=0}^k n_j = n+1=a(k+1)+b+1$ and $1 \le b+1 \le k+1$, it must be the case that $k-b$ of the parameters $n_j$ are equal to $a$ and $b+1$ are equal to $a+1$. Therefore, by~(\ref{eq:exppayoff}), the expected payoff is
\[
\frac{1}{k+1}((k-b)p^{n-a+1} + (b+1)p^{n-(a+1)+1}) = \left( \frac{k-b}{k+1} \right) p ^{n-a+1} + \left( \frac{b+1}{k+1} \right) p ^{n-a}.
\]
\end{proof}

\section{A continuous version of the game}
\label{sec:cont}

Since the non-adaptive variant of the game with $k \ge 2$ seems to be difficult to analyze, we consider a continuous version of the game, in the hope that this might offer further insight. The game is played on a unit length circle $C$, with root $O$. We consider $C$ as the unit interval $[0,1]$ with the points $0$ and $1$ identified, and equipped with Lebesgue measure $\lambda$, so that $\lambda(A)$ is the measure (or length) of a measurable subset $A$ of $[0,1]$. When referring to a point on $C$, we measure the distance clockwise around the circle from $O$.

We use the expanding search paradigm of \cite{AL13} for continuous rooted networks, defined as follows. An expanding search is a family of connected sets $S(t) \subseteq C$ for $t \in [0,1]$ satisfying
\begin{enumerate}
\item $S(0) = \{O\}$,
\item $S(t) \subset S(t')$ for $t < t'$,
\item $\lambda(S(t)) = t$ for $t \in [0,1]$.
\end{enumerate}
The set $S(t)$ corresponds to the part of the network searched by time $t$.

We consider a game where the Hider chooses $k$ points on the circle: that is a subset $H$ of $C$ of cardinality $k$.  For strategies $S$ and $H$, let $T(S,H)=\inf \{t: H \subset S(t) \}$ be the first time the search contains all $k$ points. We call this the {\em search time}. We extend the definition of $T$ so that $T(s,h)$ denote the expected search time for mixed strategies $s$ and $h$.

In the discrete game with constant detection probabilities $p$ considered in the previous section of this paper, the probability that the Searcher does not get captured after searching $j$ vertices is $p^j$. Mirroring this assumption in the continuous case, we assume that the probability the Searcher does not get captured by time $t$ is $p^t$, where $p \in (0,1)$. So the payoff of the game, which is the probability all targets are found, is given by $p^{T(S,H)}$. As before, the Hider is the minimizer and the Searcher is the maximizer.

\cite{Lidbetter13} considered a game with the same strategy sets, but where the payoff was $T(S,H)$.  Also, the Searcher was the minimizer and the Hider was the maximizer.  We will consider the performance of the strategies proven to be optimal in the game \cite{Lidbetter13}, starting with the case of adaptive search, followed by that of non-adaptive search.

Theorem~\ref{thm:cont-adapt} gives a solution to the continuous game in the adaptive setting. It is worth pointing out that the optimal strategies are very similar to those in the discrete game and are precisely the same as the optimal strategies in the adaptive version of the continuous game considered in~\cite{Lidbetter13}.

\begin{theorem} \label{thm:cont-adapt}
In the continuous search and rescue game with $k$ targets in the adaptive setting, it is optimal for the Hider to use the pure strategy $H=\{1/(k+1),2/(k+1),\ldots,k/(k+1)\}$. It is optimal for the Searcher to pick each of the following pure strategies with probability $1/(k+1)$: search the circle in the clockwise direction until $j$ objects are found, then search the circle in the anti-clockwise direction (for each $j=0,1,\ldots,k$). The value of the game is $p^{k/(k+1)}$.
\end{theorem}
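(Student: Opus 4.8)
The plan is to establish matching bounds on the value, showing that the proposed Hider strategy guarantees the payoff is at most $p^{k/(k+1)}$ and the proposed Searcher strategy guarantees it is at least $p^{k/(k+1)}$; since the game is zero-sum, this simultaneously proves optimality of both strategies and pins down the value.

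For the upper bound I would fix the Hider's pure strategy $H=\{1/(k+1),\ldots,k/(k+1)\}$ and argue that every expanding search needs time at least $k/(k+1)$ to cover it. The key observation is that for any $t$, the set $S(t)$ is a connected arc containing $O$, so its complement is also an arc. If $S(t)$ contains all of $H$ together with $O$, then its complement avoids all $k+1$ of the equally spaced points $\{0,1/(k+1),\ldots,k/(k+1)\}$; being a single arc, the complement must therefore lie inside one of the $k+1$ gaps between consecutive such points, each of length $1/(k+1)$. Hence $\lambda(C\setminus S(t))\le 1/(k+1)$, so $t=\lambda(S(t))\ge k/(k+1)$ whenever $H\subset S(t)$, giving $T(S,H)\ge k/(k+1)$ and payoff at most $p^{k/(k+1)}$ for every $S$.

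For the lower bound I would evaluate the Searcher's mixed strategy against an arbitrary pure Hider strategy $H=\{h_1<\cdots<h_k\}$. Writing $h_0=0$ and $h_{k+1}=1$ and setting $g_j=h_{j+1}-h_j$ for $j=0,\ldots,k$, the $k+1$ gaps satisfy $\sum_{j=0}^k g_j=1$. The main computation is to check that the pure strategy indexed by $j$ (go clockwise until $j$ targets are found, then go anticlockwise) has search time exactly $T_j=h_j+(1-h_{j+1})=1-g_j$: going clockwise to the $j$th target reaches $h_j$, after which the remaining targets all lie anticlockwise, and the last of them to be reached is $h_{j+1}$, at anticlockwise distance $1-h_{j+1}$. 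The expected payoff under the uniform mixture is then $\tfrac{1}{k+1}\sum_{j=0}^k p^{1-g_j}$. Since $x\mapsto p^{1-x}$ is convex (it is a positive multiple of $e^{x\ln(1/p)}$ with $\ln(1/p)>0$), Jensen's inequality applied to the $g_j$, whose mean is $1/(k+1)$, gives $\tfrac{1}{k+1}\sum_{j=0}^k p^{1-g_j}\ge p^{1-1/(k+1)}=p^{k/(k+1)}$, as required.

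I expect the main obstacle to be the correct bookkeeping in the search-time computation $T_j=1-g_j$ for a \emph{general} target configuration, since one must use the adaptive nature of the strategy and verify that covering the single furthest-anticlockwise remaining target automatically covers all the others. The convexity step is then routine once the payoff is expressed in terms of the gaps $g_j$. It is also worth remarking briefly why only the $k+1$ strategies indexed by $j=0,\ldots,k$ need be considered: exactly as in the discrete case, any undominated adaptive search alternates finitely between the two directions and achieves the same payoff as the single clockwise-then-anticlockwise strategy indexed by the total number of targets found in the clockwise phase.
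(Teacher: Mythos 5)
Your proposal is correct and follows essentially the same route as the paper's proof: the Hider's equally spaced pure strategy forces search time at least $k/(k+1)$, and the Searcher's uniform mixture over the $k+1$ clockwise-then-anticlockwise strategies yields expected payoff $\tfrac{1}{k+1}\sum_j p^{1-g_j} \ge p^{k/(k+1)}$ by Jensen's inequality. In fact your arc-complement argument for the upper bound spells out a detail the paper merely asserts, and your closing remark on undominated strategies, while true, is not needed since the two bounds alone establish the value and optimality.
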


\begin{proof}
First note that against the Hider strategy $h$ given in the statement of the theorem, the expected search time $T(S,h)$ cannot be less than $k/(k+1)$ for any Searcher strategy $S$. Hence, the expected payoff must be at most $p^{k/(k+1)}$.

Now suppose the Searcher uses the strategy given in the statement of theorem and the Hider uses some arbitrary pure strategy $H=\{x_1,\ldots,x_k\}$, where $0 < x_1 < \cdots <x_k < 1$. Then setting $x_0=0$ and $x_{k+1}=1$, the search time is $1-(x_{j+1}-x_j)$ with probability $1/{k+1}$ for each $j=1,\ldots,k+1$. Hence the expected payoff is
\[
\sum_{j=1}^{k+1} \frac{1}{k+1} p^{1-(x_{j+1}-x_j)}.
\]
Since the function $p^t$ is convex, it follows from Jensen's inequality and the monotonicity of $p^t$ that the expected payoff is at least
\[
p^{ \frac{1}{k+1} \sum_{j=1}^{k+1} 1- (x_{j+1}-x_j)} = p^{k/(k+1)}.
\]
\end{proof}

Note that for any game with the same strategy sets as those of the continuous search and rescue game in the adaptive setting, if the payoff is some convex function of the search time then the strategies of Theorem~\ref{thm:cont-adapt} will be optimal.

We now turn to the continuous game in the non-adaptive setting, first describing the Hider strategy used in \cite{Lidbetter13}, which we will call $h^*$. First the Hider picks a number $x$ uniformly at random from the interval $[0,1/k]$ then hides the targets at the points $\{x,x+1/k,x+2/k,\ldots,x+(k-1)/k\}$. An important property of this strategy is for any Searcher strategy, after time $1-1/k$, there will be one undiscovered target remaining, hidden uniformly at random in the unsearched region.

Next we describe the Searcher strategy, which we will denote by $s^*$. The Searcher picks an integer $j$ uniformly at random between $1$ and $k$ then picks with equal probability ``clockwise'' or ``anticlockwise''.  The Searcher then travels in the chosen direction for distance $j/k$ before traveling from $O$ in the other direction for distance $(k-j)/k$. Theorem 3.8 of \cite{Lidbetter13} showed that the expected search time of this strategy against any Hider strategy is at most $1-1/(2k)$.

\begin{proposition} \label{prop:cont-nonadapt}
The value $V$ of the game satisfies
\begin{align}
p^{1-1/(2k)} \le V \le  \frac{k(p- p^{1-1/k})}{\log p}. \label{eq:bounds}
\end{align}
\end{proposition}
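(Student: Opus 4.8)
The plan is to prove the two inequalities separately, using the Searcher strategy $s^*$ to establish the lower bound and the Hider strategy $h^*$ to establish the upper bound, so that the proof mirrors the way $V = \max_s \min_h = \min_h \max_s$ decomposes.

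For the lower bound $V \ge p^{1-1/(2k)}$, I would have the Searcher play $s^*$ and bound the expected payoff against an arbitrary pure Hider strategy $H$. Writing $T = T(s^*,H)$ for the (random) search time, Theorem 3.8 of \cite{Lidbetter13} tells us that $\mathbb{E}[T] \le 1 - 1/(2k)$. Since $0 < p < 1$, the map $t \mapsto p^t$ is convex and decreasing, so Jensen's inequality gives $\mathbb{E}[p^T] \ge p^{\mathbb{E}[T]}$, and monotonicity then gives $p^{\mathbb{E}[T]} \ge p^{1-1/(2k)}$. As this holds for every $H$, the strategy $s^*$ guarantees the Searcher a payoff of at least $p^{1-1/(2k)}$, whence $V \ge p^{1-1/(2k)}$.

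For the upper bound, I would have the Hider play $h^*$ and show that \emph{every} Searcher strategy yields the same payoff against it. The quoted property of $h^*$ is that, whatever the Searcher does, at time $1-1/k$ exactly one target remains, distributed uniformly on the unsearched arc, which has length $1/k$. The key observation is that from this moment the expanding-search constraint forces the searched region (a connected arc) to grow only by eating into the unsearched arc from its two ends, so at elapsed time $s \in [0,1/k]$ the still-unsearched part is a sub-arc of length $1/k - s$, \emph{independently of how the Searcher splits its effort between the two ends}. Hence the survival probability of the last target is $(1/k - s)/(1/k) = 1 - ks$; that is, the remaining search time $\tau$ is uniform on $[0,1/k]$ for any Searcher strategy. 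The total search time is $T = (1-1/k) + \tau$, so the payoff against $h^*$ equals
\[
p^{1-1/k} \int_0^{1/k} k\, p^{u}\, du = \frac{k\bigl(p - p^{1-1/k}\bigr)}{\log p}
\]
for every Searcher strategy, and taking the best Searcher response gives $V \le \frac{k(p - p^{1-1/k})}{\log p}$.

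The routine part is the final integral; the step requiring the most care is the uniformity claim in the upper bound. I would need to argue that the expanding-search requirement genuinely confines coverage to the two endpoints of the unsearched arc — any attempt to search into its interior would disconnect the searched set and so is forbidden — so that the covered measure, and hence the survival probability $1 - ks$, depends only on elapsed time and not on the (non-adaptive) schedule. Once this is in place, the fact that all Searcher strategies tie against $h^*$ makes the upper bound immediate, with no optimization over Searcher strategies required.
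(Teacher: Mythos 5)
Your proposal is correct and takes essentially the same route as the paper: the lower bound via the strategy $s^*$, the bound $T(s^*,H)\le 1-1/(2k)$ from Theorem 3.8 of Lidbetter (2013), and Jensen's inequality applied to the convex decreasing map $t\mapsto p^t$; and the upper bound via $h^*$ and the computation $\int_{1-1/k}^{1} k\,p^t\,dt = k(p-p^{1-1/k})/\log p$. The only difference is cosmetic: you spell out why the remaining search time after $1-1/k$ is uniform on $[0,1/k]$ for every Searcher strategy (the searched arc can only grow into the unsearched arc from its two endpoints), whereas the paper simply quotes this as a known property of $h^*$ inherited from Lidbetter (2013).
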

\begin{proof} For the lower bound, we consider the Searcher strategy $s^*$ described above. As remarked, the expected search time $T(s^*,H)$ against any Hider strategy $H$ satisfies $T(s^*,H) \le 1-1/(2k)$. Again, applying Jensen's inequality, the expected payoff of $s^*$ against $H$ is at least $p^{1-1/(2k)}$.

For the upper bound, we consider the Hider strategy $h^*$ described above, and we use the fact that against this strategy, any Searcher strategy will have found $k-1$ objects by time $1-1/k$, and the final object will be uniformly hidden in the remaining part of the circle that has not been searched. Therefore the expected payoff is
\[
\int_{1-1/k}^1 p^t \cdot k~dt = \left[ \frac{ k p^t}{\log p} \right]_{t=1-1/k}^1 = \frac{ k(p- p^{1-1/k})}{\log p}.
\]
\end{proof}

Let $f(p)$ be the ratio of the upper to lower bounds, so that 
\[
f(p) = \frac{k(p-p^{1-1/k})}{p^{1-1/(2k)}\log p}=\frac{k(p^{1/(2k)}-p^{-1/(2k)})}{\log p}.
\]

Then the first derivative of $f(p)$ is
\begin{align*}
f'(p)%& =\frac{k[(1/{2k})p^{1/(2k)-1}-(-1/{2k})p^{-1/(2k)-1}]\log p-k/p(p^{1/(2k)}-p^{-1/(2k)})}{\log^2p}\\
& =\frac{\log p(p^{\frac{1}{2k}}+p^{-\frac{1}{2k}})-k(p^{\frac{1}{2k}}-p^{-\frac{1}{2k}})}{2p\log^2p}
\end{align*}
Let $$g(p)=\log p(p^{\frac{1}{2k}}+p^{-\frac{1}{2k}})-k(p^{\frac{1}{2k}}-p^{-\frac{1}{2k}})$$
be the denominator of $f'(p)$.
Then the first derivative of $g(p)$ is
\[
g'(p)=\frac{1}{2k}\log p(p^{\frac{1}{2k}-1}-p^{-\frac{1}{2k}-1}).
\]
It is easy to see that $g'(p)$ is positive for all $p \in (0,1)$, since $\log p < 0$ and $p^{1/(2k)-1} < p^{-1/(2k)-1}$. Therefore, $g(p)<g(1)=0$.

Since the denominator of $f'(p)$ is positive, $f'(p)$ must be negative. It follows that $f(p)$ is monotonically decreasing. Moreover, we can calculate the limit of the ratio $f(p)$ as $p$ approaches 1 by applying L'H\^opital's rule as follows.
\[
\lim_{p \rightarrow 1} f(p) = \lim_{p \rightarrow 1} \frac{k(1/(2k)p^{1/(2k)-1} + 1/(2k) p^{-1/(2k)-1})}{1/p}= 1.
\]
Similarly, we can calculate the limit as $p$ approaches $0$:
\begin{align*}
\lim_{p \rightarrow 0} f(p) &= \lim_{p \rightarrow 0} \frac{k(1/(2k)p^{1/(2k)-1} + 1/(2k) p^{-1/(2k)-1})}{1/p} \\
&= \lim_{p \rightarrow 0} k(1/(2k)p^{1/(2k)} + 1/(2k) p^{-1/(2k)}) \\
&=\infty.
\end{align*}
So to sum up, the ratio of the bounds decreases as $p$ increases, and is asymptotically equal to 1 when $p$ approaches 1. In other words, the two strategies of Proposition~\label{prop:cont-nonadapt} are asymptotically optimal as the success probability tends to 1. 

Although the ratio is asymptotically equal to $\infty$ as $p$ approaches 0, the {\em absolute} difference between the upper and lower bounds is small. As shown in Figure~\ref{fig:graph}, for $k=2$ the two bounds are very close to each other in absolute terms.  In fact, computational methods show that the maximum difference $\max_{p \in [0,1]} k(p-p^{1-1/k})/\log(p)-p^{1-1/(2k)}$ between the upper and lower bounds for $k=2$ is approximately $0.0103$, occurring at roughly $x=0.0653$.  Furthermore, the maximum difference between the bounds for $k$ up to $9999$ is never any greater than $0.0103$.

\begin{figure}[h!]
\centering
\includegraphics[scale=0.1]{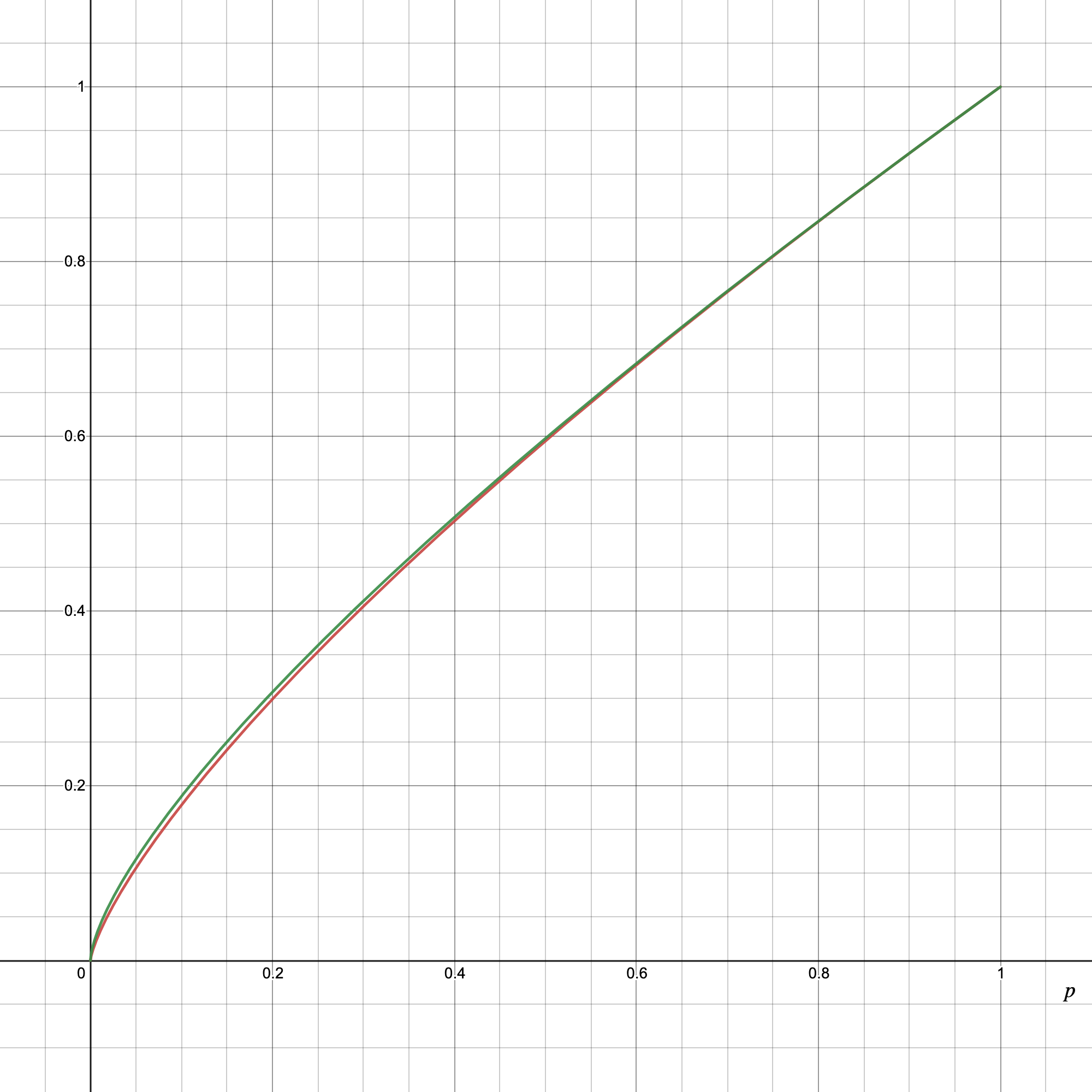}
\caption{Upper and lower bounds for the value of the game when $k=2$.}
\label{fig:graph}
\end{figure}

%Suppose $p$ is close to $1$, so that $p = 1-\varepsilon$, for some small $\varepsilon >0$. Then the ratio of the upper to lower bounds in~(\ref{eq:bounds}) is
%\begin{align*}
 %\frac{k(p- p^{1-1/k})}{p^{1-1/(2k)} \log p } &= \frac{k((1-\varepsilon)^{1/(2k)} - (1-\varepsilon^{-1/(2k)}))}{\log (1- \varepsilon)} \\
 %& = \frac{k((1 - \varepsilon/(2k) + \mathcal O(\varepsilon^2)) - (1 + \varepsilon/(2k)+ \mathcal O(\varepsilon^2)))}{-\varepsilon + \mathcal O(\varepsilon^2)} \\
% &= \frac{\varepsilon+ \mathcal O(\varepsilon^2)}{\varepsilon+ \mathcal O(\varepsilon^2)} \\
 %&= 1 + \mathcal{O}(\varepsilon).
%\end{align*}
%So for $p$ close to 1, the two bounds are close to each other.  

%This could be made more precise by proving that the ratio of the bounds is decreasing in $p$. How big does $p$ have to be so that the ratio is less than, say, 2?

%Questions: For small $p$ is the value closer to the upper bound in~\ref{eq:bounds} or is it closer to the lower bound?  Perhaps some experiments with the discrete version of the game could give a clue.  If we cannot solve the game precisely (which would be nice, even for $k=2$), what are some better strategies for small $p$?

\section{Conclusion}
\label{sec:conc}

We have analyzed the search and rescue game on the simplest networks that are not trees: cycles. Even for a single hidden target, the game is not trivial to solve unless all the detection probabilities are equal. Indeed, for equal detection probabilities, the adaptive version of the game admits a neat solution for an arbitrary, known number of hidden targets. The non-adaptive version of the the game with multiple targets is much harder to analyze. Finding approximate solutions to a continuous version of this game may offer some clues as to how to find approximate solutions to the discrete game in further work.

\section*{Acknowledgements} This material is based upon work supported by the National Science Foundation under Grant No. IIS-1909446.


\begin{thebibliography}{}



\bibitem[Alpern(2011)]{A11} Alpern S (2011) Find-and-fetch search on a tree. {\em Operations Research} 59(5):1258--1268.

\bibitem[Alpern and Gal(2003)]{AG03} Alpern S and Gal S (2003) {\em The Theory of Search Games and Rendezvous}. Kluwer International Series in Operations Research and Management Science (Kluwer, Boston), 319.

\bibitem[Alpern and Lidbetter(2013)]{AL13} Alpern S and Lidbetter T (2013) Mining coal or finding terrorists: The expanding search paradigm, {\em Operations Research} 61(2):265--279.

\bibitem[Assaf and Zamir(1987)]{AZ87} Assaf D and Zamir S (1987) Continuous and discrete search for one of many objects, {\em Oper. Res.
Lett.} 6:205--209.





\bibitem[Averbakh and Pereira(2012)]{AP12} Averbakh I and Pereira J (2012) The flowtime network construction problem. {\em IIE Transactions} 44(8):681--694.

\bibitem[Baston and Kikuta(2017)]{BK17} Baston V and Kikuta K (2017) Search games on a broken wheel with traveling and search costs. {\em Journal of the Operations Research Society of Japan} 60(3):379--392.

\bibitem[Hohzaki(2016)]{H16} Hohzaki R (2016) Search Games: Literature and Survey. {\em J. Oper. Res. Soc. Jpn.} 59(1):1--34.

\bibitem[Leone {\em et al.}(2022)]{LBA22} Leone P, Buwaya J and Alpern S (2022). Search-and-rescue rendezvous. {\em European Journal of Operational Research, 297(2), 579-591.}

\bibitem[Lidbetter(2013)]{Lidbetter13} Lidbetter T (2013) Search games with multiple hidden objects. {\em SIAM Journal on Control and Optimization} 51(4):3056-3074.
	

\bibitem[Lidbetter(2020)]{Lidbetter20} Lidbetter T (2020) Search and rescue in the face of uncertain threats, {\em European Journal of Operational Research} 285(3):1153--1160.

\bibitem[Nakai(1990)]{N90} Nakai T (1990) A preemptive detection game, {\em J. Inf. Optim. Sci.} 11:1--15.

\bibitem[Sharlin(1987)]{S87} Sharlin A (1987) Optimal search for one of many objects hidden in two boxes, {\em European J. Oper.
Res.} 32:251--259.
	
\end{thebibliography}
\end{document}